  \providecommand\BibTeX{{%
    \normalfont B\kern-0.5em{\scshape i\kern-0.25em b}\kern-0.8em\TeX}}}
\setlist[enumerate]{leftmargin=1cm}
\setlist[itemize]{leftmargin=0.5cm}
\setlist[enumerate]{nosep}
\setlist[enumerate]{nosep}
\setlist{nolistsep,leftmargin=20pt}
\crefname{section}{§}{§§}
\Crefname{section}{§}{§§}
\crefname{figure}{Fig}{Fig}
\Crefname{figure}{Fig}{Fig}
\newcommand{\name}{\textsf{\textsc{Wake}}\xspace}
\newcommand{\system}{\name}
\newcommand{\idf}{\textsf{edf}\xspace}
\newcommand{\mdf}{\idf}
\newcommand{\df}[1][]{\textsf{df}#1\xspace}
\newcommand{\cmark}{\ding{52}\xspace}
\newcommand{\xmark}{{\color{gray} \ding{55}}\xspace}
\newcommand{\tmark}{{\color{gray} \scalebox{.7}{\ding{115}}}\xspace}
\newcommand{\presto}{\textsf{Presto}\xspace}
\newcommand{\polars}{\textsf{Polars}\xspace}
\newcommand{\actianvector}{\textsf{Actian Vector}\xspace}
\newcommand{\vertica}{\textsf{Vertica}\xspace}
\newcommand{\postgres}{\textsf{Postgres}\xspace}
\newcommand{\progressivedb}{\textsf{ProgressiveDB}\xspace}
\newcommand{\wanderjoin}{\textsf{WanderJoin}\xspace}
\newcommand{\pending}[1]{{\color{red} #1}}
\newcommand{\todo}[1]{{\color{red} \footnotesize Note: #1}\xspace}
\newcommand{\ignore}[1]{{}}
\definecolor{BlueColor}{HTML}{0081a7}
\definecolor{RedColor}{HTML}{bb5555}
\definecolor{OrangeColor}{HTML}{bb5555}
\definecolor{YellowColor}{HTML}{ee9944}
\definecolor{GreenColor}{HTML}{38A528}
\definecolor{BrownColor}{HTML}{bc6c25}
\definecolor{PurpleColor}{HTML}{9b5de5}
\definecolor{BrownColor}{HTML}{8c564b}
\definecolor{PinkColor}{HTML}{e377c2}
\definecolor{GreyColor}{HTML}{d7d1d0}
\definecolor{BlackColor}{HTML}{d7d1d0}  
\definecolor{DarkGreenColor}{HTML}{264653}
\definecolor{wakecolor}{HTML}{bb5555}
\definecolor{polarcolor}{HTML}{38A528}
\definecolor{postgrescolor}{HTML}{bb5555}
\definecolor{prestocolor}{HTML}{38A528}
\definecolor{noscalecolor}{HTML}{38A528}
\definecolor{progressivedbcolor}{HTML}{38A528}
\definecolor{wanderjoincolor}{HTML}{0081a7}
\definecolor{wakemapecolor}{HTML}{0081a7}
\definecolor{wakeprecisioncolor}{HTML}{ee9944}
\definecolor{wakerecallcolor}{HTML}{ee9944}
\def\medianfirsterror{$2.70\%$\xspace}
\def\medianmultipleonepercent{$3.17\times$\xspace}
\def\maxmultipleonepercent{$48.80\times$\xspace}
\def\medianmultipleonepercentola{$1.92\times$ \xspace{}}
\def\avgmemory{$4.3\times$\xspace} 
\def\maxmemory{$17.4\times$\xspace} 
\def\mediantimeactian{$4.93\times$\xspace}
\def\mediantimepolars{$11.8\times$\xspace}
\def\mediantimevertica{$21.81\times$\xspace}
\def\mediantimepostgres{$238.3\times$\xspace}
\def\mediantimepresto{$78.3\times$\xspace}
\def\mediantimebest{$4.93\times$\xspace}
\def\medianoverheadactian{$5.3\times$\xspace} %
\def\medianoverheadvertica{$0.8\times$\xspace} %
\def\medianoverheadpolars{$1.5\times$\xspace} 
\def\medianoverheadpostgres{$0.1\times$\xspace} 
\def\medianoverheadpresto{$0.3\times$\xspace} 
\def\medianoverheadbest{$1.3\times$\xspace} 
\author{
Nikhil Sheoran$^{+}\mbox{*}$, 
Supawit Chockchowwat$\mbox{*}$, 
Arav Chheda \\
Suwen Wang,
Riya Verma,
Yongjoo Park}
\email{nikhil.sheoran@databricks.com,  {supawit2,aravmc2,suwenw2,rverm2,yongjoo}@illinois.edu}
\affiliation{%
  \institution{Databricks$^{+}$ and University of Illinois Urbana-Champaign}
}
\begin{document}
\title{A Step Toward Deep Online Aggregation}




\begin{abstract}

For exploratory data analysis,
    it is often desirable to know
        what answers you are likely to get
    \emph{before} actually obtaining those answers.
This can potentially be  achieved
    by designing systems to offer
        the estimates of a data operation result---say \textsf{op(data)}---earlier in the process
        based  on partial data processing.
Those estimates  continuously refine
        as more data is processed and finally converge to the exact answer.
Unfortunately,
    the existing techniques---called \emph{Online Aggregation} (OLA)---are limited to a single operation;
    that is, we \emph{cannot} obtain the estimates for \textsf{op(op(data))}
        or \textsf{op(...(op(data)))}.
If this \emph{Deep OLA} becomes possible, data analysts will be able 
    to explore data more interactively 
    using complex cascade operations.

In this work, we take a step toward \emph{Deep OLA}
    with \emph{evolving data frames} (\mdf),
        a novel data model
    to offer OLA
        for nested ops---\textsf{op(...(op(data)))}---by 
    representing an evolving structured data (with converging estimates) 
        that is \emph{closed} under set operations.
That is, \textsf{op(\mdf)}
    produces yet another \mdf;
    thus, we can freely apply successive operations to \mdf
        and obtain an OLA output for each op.
We evaluate its viability
    with \system, an \mdf-based OLA system,
        by examining against state-of-the-art OLA and non-OLA systems.
In our experiments on TPC-H dataset,
    \system produces its first estimates \mediantimebest faster (median)---with \medianoverheadbest median slowdown for exact answers---compared 
        to conventional systems.
Besides its generality,
    \system is also \medianmultipleonepercentola faster (median) than existing OLA systems
        in producing estimates of under 1\% relative errors.
\end{abstract}

\maketitle



\section{Introduction}
\label{sec:intro}


For ad-hoc data exploration,
the fastest way to gain insights
    would be to extract as much useful information as possible 
        from partial data processing
by computing intermediate estimates that
    highly resemble the (future) final answer,
while continuously refining the estimates until the entire data is processed.
Since the pioneering work by Hellerstein et al.~\cite{hellerstein1997online},
    this data processing paradigm---called \emph{Online Aggregation} (OLA)---has 
been studied in various directions
    to improve its generality and performance
    with new architectures~\cite{berg2019progressivedb,pansare2011online,wu2009distributed},
    novel join algorithms~\cite{li2016wander,haas1999ripple,urhan2000xjoin,mokbel2004hash,dittrich2002progressive,luo2002scalable,chen2010pr},
    support of subqueries~\cite{zeng2015g},
    specialized indexing~\cite{crotty2016case,kim2015rapid,park2016visualization,wu2010continuous},
    etc.

\begin{figure}[t]

\hspace*{-6mm}
\begin{tikzpicture}

\tikzset{
mynode/.style={
    font=\footnotesize,
    align=center,
    text width=0.32\linewidth
},
}

\begin{axis}[
    width=70mm,
    height=35mm,
    xmin=0,
    xmax=2,
    ymin=0,
    ymax=2,
    xtick={1,2},
    ytick={1,2},
    xticklabels={\textbf{One-time}, \textbf{Progressive/OLA}},
    yticklabels={
        \textbf{Template}\\ \textbf{Operations}, 
        \textbf{Cascade/Deep}\\ \textbf{Operations}
        },
    yticklabel style={align=center,yshift=-5mm,xshift=-2mm,rotate=0},
    xticklabel style={align=center,xshift=-14mm,yshift=0mm},
    tick label style={font=\footnotesize\sf},
    label style={font=\footnotesize\bf},
    grid=both,
    minor grid style={solid,gray},
]

\node[mynode] at (axis cs: 0.5, 1.5) { QuickR \cite{kandula2016quickr} };

\node[mynode] at (axis cs: 0.5, 0.5) { VerdictDB \cite{park2018verdictdb} };

\node[mynode] at (axis cs: 1.5, 0.5) 
    { RippleJoin \cite{haas1999ripple} 
        \\ WanderJoin \cite{li2016wander}
        \\ ProgressiveDB \cite{berg2019progressivedb} };

\node[mynode,font=\small] at (axis cs: 1.5, 1.5) {\textbf{This Work (\name)}};

\end{axis}

\end{tikzpicture}

\vspace{-6mm}
\caption{Our system \system enables OLA for deep operations}
\label{fig:position}
\vspace{-2mm}
\end{figure}

\def\thefootnote{+}\footnotetext{Work done while at University of Illinois at Urbana-Champaign.}\def\thefootnote{\arabic{footnote}}
\def\thefootnote{$^{}\mbox{*}$}\footnotetext{These authors contributed equally to this work.}\def\thefootnote{\arabic{footnote}}

\begin{table*}[t]
\begin{center}
\caption{Summary of existing work. \tmark/\xmark indicates limited/no support.}
\vspace{-3mm}
\footnotesize
\begin{tabular}{ l c c l l }
\toprule
 \textbf{System/Method} & \textbf{OLA?} & \textbf{Deep Query?} & \textbf{Novelty} 
    & \textbf{Weakness/Difference} \\ 
\midrule
 OLA (Hellerstein)~\cite{hellerstein1997online} & \cmark & \xmark & The first OLA proposal & Only for simple SQL with no joins/subqueries \\  
 RippleJoin~\cite{haas1999ripple,luo2002scalable} & \cmark & \tmark & Join algorithm for OLA & Exponential complexity for multiple joins \\
 WanderJoin~\cite{li2016wander} & \cmark & \tmark & Supports multiple joins & Requires indexes / May not produce exact answers \\
 G-OLA~\cite{zeng2015g}      & \cmark & \tmark & Supports filters with subqueries 
    & Some data need repetitive processing \\
 QuickR~\cite{kandula2016quickr}    & \xmark & \cmark & Pushes down sampling operators & Not OLA (each query answer is from a single sample) \\
 VerdictDB~\cite{park2018verdictdb}  & \xmark & \tmark & Platform-independent & Not OLA (each query answer is from a single sample) \\
 \textbf{Ours (\name)} & \cmark & \cmark & 
    Supports deeply nested operations & 
    May need more memory (\cref{sec:processing}); need to tune partition size (\cref{sec:ablation-partition}); no query optimizer \\
\bottomrule
\end{tabular}
\end{center}
\end{table*}

Unfortunately,
    the existing OLA has a common limitation,
        which makes it \emph{not} the first choice
    for today's data exploration.
That is,
    the existing OLA 
    \emph{precludes subsequent operations
        on previous OLA outputs.}
To illustrate this, 
    suppose a data analysis session\footnote{This example data analysis is a rewritten version of TPC-H query 18.}
        expressed in pandas-like methods~\cite{mckinney2011pandas} as follows:
\begin{lstlisting}[
    basicstyle=\ttfamily\small\linespread{0.9}\selectfont,
    frame=l,
    framesep=4.0mm,
    framexleftmargin=2.0mm,
    fillcolor=\color{GreyColor},
    rulecolor=\color{RedColor},
    numberstyle=\scriptsize,
    commentstyle=\color{GreenColor},
    xleftmargin=0.8cm,
    % columns=fullflexible,
    numbers=left,
    language=Python,
    stepnumber=1,
    morekeywords={join,sort}
    ]
lineitem = read_csv('...')
# item count for each order
order_qty = lineitem.sum(qty, by=orderkey)
# select only the large orders
lg_orders = order_qty.filter(sum_qty > 300)
# find the customers with biggest order sizes
lg_order_cust = lg_orders.join(orders).join(customer)
qty_per_cust = lg_order_cust.sum(sum_qty, by=name)
top_cust = qty_per_cust.sort(sum_qty, desc=True).limit(100)
\end{lstlisting}


The first output (\textsf{L3}) is
    aggregated/filtered again to find the top customers (\textsf{L5-L9}).
Existing OLA can incrementally compute the first output (i.e., \texttt{order\_qty}),
but it cannot be subsequently processed 
    for filter/join/sum in an OLA fashion
        until its final answer is obtained.
Specifically,
    the existing OLA has two limitations.
First, it treats every query independently
    without reasoning about
        how its output may be consumed by
    subsequent operations.
Second, it cannot handle arbitrarily deep queries;
    that is, even if we compose a (long) query for directly computing \texttt{avg\_order\_size},
        the aggregation over aggregation---with correct adjustments---cannot be produced
(\emph{note:} this problem is different from incremental view maintenance,
    which always produces the exact results).

In this work, 
    we tackle this limitation with
\emph{evolving data frames} (or \mdf),
    \emph{a new data/processing model designed to enable Deep Online Aggregation---the ability to
        apply subsequent operations to previous OLA outputs for another OLA output.}
For each operation, \mdf offers \emph{converging} estimates for the final answer---with 
    diminishing expected errors (\cref{sec:estimate:limit})---relying 
        on a common assumption that unseen data mimics the observed;
once the entire data is processed, each \mdf exactly matches the one that 
    can be obtained
        by conventional data systems.
To evaluate the viability of our approach,
    we implement \system\footnote{\system stands for \textbf{W}e \textbf{A}lready \textbf{K}now \textbf{E}nough.}, 
        an \mdf-based OLA system,
    and examine its performance
        against existing OLA systems (ProgressiveDB~\cite{berg2019progressivedb}, WanderJoin~\cite{li2016wander}) 
            as well as modern data systems/libraries (Presto~\cite{sethi2019presto}, PostgreSQL, Polars~\cite{polars}).
\emph{To our knowledge,
    no previous work has formally studied a data model
        for Deep OLA
    and has evaluated its efficiency for practical use cases
        with comprehensive experiments.}

\paragraph{Challenges}

Given a query (or an operation), existing OLA can be understood as 
    a process that 
        converts an input data into a series of intermediate/final results,
where notably, the input and the output are of different types,
    which is the fundamental reason that 
        OLA cannot be applied to the results of OLA.
Specifically, we observe the following challenges.
First, the existing model for structured data (which we call \emph{data frame})
    is insufficient for expressing progressively changing data frames
        which may contain approximate attribute values
            and their row counts may change.
%
Second, the existing set-oriented (or relational) operations are
    designed for a final data frame,
        not an approximate one;
simply applying regular operations to an evolving data frame may produce biased values
    because partial data must be regarded as a sample. 
%
Third, offering high performance is critical. 
Any OLA-driven extensions to the existing data model
    may incur overhead,
        which must be small enough to
    still deliver significantly more interactive computing 
        compared to conventional \emph{all-at-once} approaches.

\paragraph{Our Approach}

Our new data model, \mdf, is \emph{closed} under a  class of set operations;
that is, an \mdf
    expresses an evolving OLA output,
        which when transformed by a set operation,
    again produces yet another \mdf.
Specifically,
    \mdf has the following key characteristics.
First,
    each \mdf always converges to the exact/final answer once the entire data is processed.
Second, to ensure that an operation on an \mdf produces another \mdf,
    our set operations---expressed using \textsf{map}, \textsf{filter}, \textsf{join}, and \textsf{agg}---maintains 
    two unique properties inside each \mdf, i.e., 
        \emph{mutable attributes} and \emph{cardinality growth},
    which are key to producing accurate estimates (\cref{sec:motivation_properties_mdf}).
Third, our internal processing is designed to minimize redundant computation whenever possible.


\paragraph{Orthogonal Work}

OLA (including Deep OLA) can be understood as
    a mechanism that translates aggregation-involving queries
into an incremental computation logic,
    making it orthogonal to incremental 
        computation frameworks~\cite{murray2013naiad,mcsherry2013differential}
    and incremental view maintenance~\cite{ahmad2009dbtoaster,zeng2016iolap}.
OLA belongs to Approximate Query Processing, a broader class of query processing paradigm;
for example,
sampling-based AQP~\cite{park2018verdictdb,agarwal2013blinkdb,kandula2016quickr}
    produces a single approximate answer
(not a series of continuously refining answers like OLA).


\paragraph{Contributions}

This paper shares the following findings:
\begin{enumerate}
\item Our new data model, \emph{evolving data frames} (\mdf), enables successive OLA operations. (\cref{sec:model})
\item Our processing model, representing common set operations, can transform an \mdf into another \mdf
    (with correct properties) relying on our internal inference technique. (\cref{sec:processing} and \cref{sec:inference})
\item Our extended data model allows propagating confidence intervals through the pipeline. (\cref{sec:confidence})
\item \system's multi-thread implementation for OLA can offer high processing speed 
    with pipelined parallelism. (\cref{sec:system})
\item \system can produce an intermediate result \mediantimebest faster than
    the associated final answer, while \system incurs (only) \medianoverheadbest overhead compared to non-OLA.
    The median relative accuracy of the first intermediate answer for TPC-H datasets is \medianfirsterror,
        which converges quickly toward zero. (\cref{sec:exp})
\end{enumerate}
The technical contributions listed above
    appears after we present our motivation behind the design of evolving data frames (\cref{sec:motivation}).







\section{Motivation}
\label{sec:motivation}

We first describe a need for Deep OLA-specific data model (\cref{sec:motivation:type}).
To achieve Deep OLA, we present several cases our proposed framework must handle (\cref{sec:evolution:cases}),
    and discuss new unique properties (\cref{sec:motivation_properties_mdf}).

\input{figures/fig_motivation_cases}

\subsection{OLA Input/Output As Type}
\label{sec:motivation:type}

We argue that it is critical to 
    formalize the outputs of OLA as a \emph{type}.
For example,
    integers (e.g., -1, 0, 1, 2)
        are \emph{closed} under addition;
    thus, we can apply successive additions to the outputs of previous additions,
        e.g., (1 + 3) + 3 = 4 + 3,
    without being concerned about how the value 4 (= 1+3) is originally obtained.
Likewise, database relations (representing 2-D structured data)
    are \emph{closed} under relational operations (e.g., projection, aggregation, join).

In contrast, the existing OLA is designed to consume a relation as an input
    and outputs \emph{a series of} relations,
        each representing a converging estimate for the final answer 
            (with expectedly decreasing errors).
Thus, the input to and the output from OLA are of different types (i.e., relations are \emph{not closed} under OLA),
    which makes it non-trivial to apply successive OLA to the outputs of the previous OLA.
Theoretically,
    it might be possible to apply another OLA to each estimate relation;
    however, first, it is not straightforward how to interpret this,
        and second, the number of output estimates grows exponentially with the number of operations
        if we na\"ively apply OLA to each estimate.
This motivates us to introduce a new type that is \emph{closed} under OLA,
    which we call \emph{evolving data frame} (\mdf).

Unlike integers or relations, however,
    \mdf represents an \emph{evolving} object (not a \emph{state} of memory);
        thus, it is less obvious how we should understand/define
    the operations that map an evolving object to another evolving object.
We consider
    \mdf as an object consisting of
        a series of $K$ states,
    where each state contains
        a converging estimate for the final answer.
Also, we consider an operation on \mdf as a map
    from a set of states to another set of states
    by appropriately modifying the information inside each state (i.e., estimates and metadata)
        depending on the types and the parameters of the operation.

To formalize those states and operations on them,
    we start with a few example data frames
        that represent different types of
    transformations each data frame may go through
        during OLA (\cref{sec:evolution:cases}),
    based on which we will formalize \mdf in
the following sections (\cref{sec:model},\cref{sec:processing},\cref{sec:inference}).

\subsection{OLA Operations: Case Analysis}
\label{sec:evolution:cases}

With the example in \cref{sec:intro},
    we discuss how different operations (e.g., agg, filter, join, limit)
        may alter an input data frame into another form,
which serves as the basis of our data model in \cref{sec:model} and \cref{sec:processing}.

\paragraph{Order-preserving Local Operation}

Suppose an input data frame (e.g., \textsf{lineitem} table)---getting 
    read from csv file(s)---contain the raw data
        sorted/clustered on a key (e.g., \textsf{orderkey}).
In processing row-wise filters and maps,
    newly appearing rows in an input data frame
        do not affect the results of already processed rows.
From the example in \cref{sec:intro}, L1 (\textsf{read\_csv}), L3 (\textsf{group-by} on keys), 
    L5 (\textsf{filter}), 
    and L7 (\textsf{join}) belong to this category.
See \cref{fig:cases:a} for illustration.
Specifically,
    let \df be an input data frame consisting of two partitions,
        i.e., \df = [\df{$_1$}, \df{$_2$}{}],
    where ``[]'' indicates union/append.
For such a \emph{local operation} \textsf{op},
    \textsf{op(df) = [op(df$_1$), op(df$_2$)]};
thus, unlike other cases described shortly,
    computing \textsf{op(df$_2$)} is independent from \textsf{df$_1$},
which makes it possible to incrementally produce the output.
Likewise, inner/left join (e.g., joining \textsf{lineitem} with \textsf{orders})
    is also an order-preserving local operation
since \textsf{join(dfa, dfb) = [join(dfa$_1$, dfb), join(dfa$_2$, dfb)]};
    its physical plan  may opt for
        different join algorithms such as progressive-merge~\cite{dittrich2002progressive} or hash joins.


\paragraph{Shuffling Operation with Inference}

If an input data frame is aggregated by a non-key 
    attribute, e.g., \textsf{lg_order_cust.sum(sum_qty, by=name) in \cref{sec:intro}},
    we  need special considerations for three reasons.
First, already produced output (raw) aggregate values may change as we process more data from an input.
Second, raw aggregate values may need to be scaled appropriately to
    produce accurate---desirably unbiased---estimates.
Third, more rows (containing new grouping key values) may appear
    in the output
    as we process more data from an input,
        which need to be modeled quantitatively 
    for subsequent operations
        that will consume this output data frame.
From the example in \cref{sec:intro}, L8 (\textsf{group-by} on non-key attributes) belongs to this category.
\cref{fig:cases:b} depicts the data flows of this case:
    the newly appearing rows in the input may affect
    an already produced output.
Specifically, let \textsf{df = [df$_1$, df$_2$]}.
For a \emph{shuffling operation} \textsf{op},
        \textsf{op(df) = op(df$_1$) $\bigoplus$ op(df$_2$)},
    where $\bigoplus$ indicates a key-based merge,
        which can be expressed as A $\bigoplus$ B = agg(union(A, B), by=key).\footnote{
            Merge operations are applicable to sum-like (or \emph{mergeable}) operations,
                for which \emph{addition} can be defined. Accordingly, avg() needs to be computed
        by separately computing sum() and count(). One notably hard case is count-distinct,
    for which we maintain exact sets (not HLL-based sketches~\cite{flajolet2007hyperloglog}).}
%


The result of this merge must be scaled to produce accurate estimates 
    if \emph{more rows may appear in the input data frame} during OLA.
For example,
    if the input data frame represents a base table for which more data are being retrieved,
    the currently observed part(s) must be considered as a sample of the input data;
    thus, the raw sum values need to be scaled up in consideration of
        the ratio between the current  row count and the entire data size
        (which serves as a sampling ratio).
In contrast,
    if the input data frame is, for example, a result of an aggregation 
        (with a low-cardinality grouping attribute),
    we are unlikely to see (many) new rows in the input data frame;
thus, the currently observed set is the entire set, thereby not requiring additional scaling. 
While the individual aggregate values may be approximate,
    since they are already converging estimates of the final answer,
        the raw sum values (without additional scaling) are also converging estimates
            of the output (\cref{sec:estimate:limit}).


\paragraph{Shuffling Operation without Inference}

Operations like \textsf{order-by} and \textsf{limit}
    must consume the entire input,
        for which no special treatment can be applied to
    improve the quality of output.
In these cases,
    upon a change of input,
    the output simply needs to be recomputed in its entirety,
        which, unlike Cases 1 and 2,
    cause inevitably redundant computation.
From the example in \cref{sec:intro}, L9 (\textsf{order-by} and \textsf{limit}) belong to this category.
For large-scale aggregation, however,
    these Case 3 operations typically appear in the latter stages
        to limit/sort the result for user consumption (e.g., bar charts);
    thus, their overhead 
        is insignificant in the context of overall computations.
Nevertheless, if a user's intention is, for example, to sort the entire data
    and to persist its result on disk,
OLA frameworks (including ours)
    do not offer additional benefits.

\subsection{Required Properties}
\label{sec:motivation_properties_mdf}
The case analysis in \cref{sec:evolution:cases}
    reveals two types of changes:
changes to attribute values (e.g., as aggregating more input rows)
    and cardinality growth (e.g., filtering input rows as they appear).

\paragraph{Mutable Attributes}

Let a \emph{mutable attribute} be an attribute whose values may change whereas
    a \emph{constant attribute} be an attribute whose values \emph{never} change.
It is useful to distinguish mutable attributes from constant attributes
    because the input attribute types affect
        how we should (re-)compute the output.
For example,
    filtering on a constant attribute (e.g., \textsf{name like \textquotesingle\%east\%\textquotesingle})
can be processed incrementally (Case 1)
    whereas
        filtering on a mutable attribute (e.g., \textsf{sum_qty > 200})
    requires re-computation (Case 3).

\paragraph{Cardinality Growth}

As observed in Case 2,
    how many rows are likely to appear in an input data frame
    must be captured to properly estimate output aggregates.
To this end,
    we define \emph{cardinality growth}:
        the relationship between query progress and group cardinality (i.e. the number of rows belonging to an aggregate group).
    After studying a diverse family of cardinality growths, we can select the most fitting growth to predict the final aggregates.
For example,
    suppose we know that the group cardinality grows linearly with the query progress; 
then, if the query progress is at 25\%,
        we would expect to see $4\times$ rows in the final group cardinality.




\section{Data Model}
\label{sec:model}

This section describes evolving data frames (\mdf)
    from a user's perspective.
Specifically, we describe its data model (\cref{sec:mdf}),
    operations on it (\cref{sec:mdf:op}),
        and current limitations (\cref{sec:model:limit}).

\subsection{Evolving Data Frame}
\label{sec:mdf}

An evolving data frame (\mdf) represents 
    a progressively changing structured data (i.e, data frame)---with
        new rows appearing and/or changing attribute values---using
    the following formal definition:
\begin{lstlisting}[
    basicstyle=\ttfamily\small\linespread{0.9}\selectfont,
    fillcolor=\color{GreyColor},
    rulecolor=\color{RedColor},
    numberstyle=\scriptsize,
    commentstyle=\color{GreenColor},
    xleftmargin=0.8cm,
    % columns=fullflexible,
    % numbers=left,
    % language=Python,
    stepnumber=1,
    keywords={}
    ]
edf := t -> df  (0 <= t <= 1)
df := list((attr1, attr2, ..., attrM))
attr := constant_attr | mutable_attr
\end{lstlisting}    
where \textsf{(attr1, attr2, ..., attrM)} defines a schema.
One or more (constant) attributes 
        serve as the \emph{primary key} (or simply \emph{key})
            to uniquely identify tuples.
An \mdf's row count (i.e., the length of a list) may increase over time,
        and the values for \textsf{mutable_attr} may change.

\paragraph{Properties for Closure}
A valid \mdf must satisfy two properties, namely 2Cs.
\textbf{\textit{(1) Consistency}}: All the \texttt{df} associated with an \mdf has the same schema;
    that is, its list of attributes remains constant over $t$.
\textbf{\textit{(2) Convergence}}: The \texttt{df} associated with $t_2$
    is a more accurate estimator of the exact answer
        compared to the \texttt{df} associated with $t_1$ ($t_1 < t_2$)
    while the \texttt{df} at $t=1$ is the exact answer.
In other words,
    \emph{an operation on \mdf must produce an \mdf that ensures these two properties,
which guarantees that
    \mdf is \emph{closed} under those operations.}

\paragraph{Data Organization}
An \mdf's list is organized using one or more \emph{partitions},
    where each partition is (simply) a subset of the list stored/accessed together (e.g., on a storage device).
An \mdf may have a \emph{clustering key}, a list of attributes determining the placements of rows 
    among partitions;
    for example, if an \mdf's clustering key is \textsf{orderkey}, 
        a partition may include the rows with \textsf{orderkey} between 1 and 10;
    then, other partitions must not contain the rows with \textsf{orderkey=5}.
A clustering key may also be present for the \mdf{}s created as results of operations on other \mdf{}s,
    as we describe in \cref{sec:processing}.
\emph{A cluster of rows} refers to those rows present together in a partition.

\paragraph{Accessing Values}

As noted in \cref{sec:motivation}, to represent an evolving data frame,
    an \mdf maintains \emph{states},
    where each state expresses a converging estimate of the final answer
        with the latest state being the most accurate in expectation.
For example, 
    if an \mdf is for \textsf{lineitem.count(by=linestatus)},
        the count value in each row of the \mdf is an unbiased estimate of the final count value
            for the same group (i.e., they are equal in expectation).
The latest state is obtainable
    via \textsf{\mdf.get()}.
If \textsf{\mdf.is_final} is true,
    the latest state holds the final answer;
the \textsf{is_final} flag is set by the system as soon as the system finds there will be no more data
            to process (by receiving \textsf{eof}).
The final answer can be obtained by \textsf{\mdf.get_final()},
        which may block until processing the entire data (if not already processed).

\paragraph{Creating EDFs}

There are two ways to create \mdf{}s.
An \mdf can be created directly from a data source
    or an \mdf can be created as a result of the operation on another \mdf,
    as follows:
\begin{lstlisting}[
    basicstyle=\ttfamily\small\linespread{0.9}\selectfont,
    fillcolor=\color{GreyColor},
    rulecolor=\color{RedColor},
    numberstyle=\scriptsize,
    commentstyle=\color{GreenColor},
    xleftmargin=0.2cm,
    % columns=fullflexible,
    % numbers=left,
    % language=Python,
    stepnumber=1,
    keywords={}
    ]
read := data_source -> edf
edf_op := (edf, op) -> edf
op  := agg(attrs, by) | filter(predicate) 
     | map(function) | join(df, options)
agg := sum | count | avg | count_distinct | min | max 
     | var | stddev
\end{lstlisting}
where details of individual operations are described in \cref{sec:mdf:op}.
When creating an \mdf from a data source, a clustering key is obtained from metadata (\cref{sec:statistics}).
These operations and types of aggregations are sufficient to express all 22 TPC-H benchmark queries~\cite{tpch}.



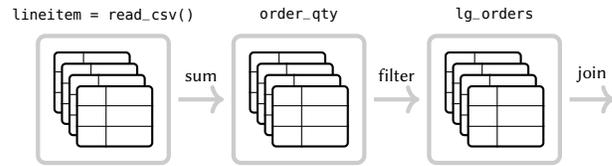
\begin{figure}[t]

\tikzset{
mytable/.pic={
    \node[
        draw=black,minimum height=8mm,minimum width=10mm,
        rounded corners=0.5mm,thick,fill=white,
    ] (-T) at (0,0) {};
    \draw[draw=black] 
        ($(-T.north west)!0.33!(-T.south west)$) -- ($(-T.north east)!0.33!(-T.south east)$);
    \draw[draw=black] 
        ($(-T.north west)!0.66!(-T.south west)$) -- ($(-T.north east)!0.66!(-T.south east)$);
    \draw[draw=black]
        ($(-T.north west)+(0.4,0)$) -- ($(-T.south west)+(0.4,0)$);
},
mylabel/.style={
    font=\tt\footnotesize,anchor=south
},
myborder/.style={
    draw=gray!40!white,ultra thick,rounded corners=1mm,
}
}

\begin{tikzpicture}

\def\s{2.6}

\def\x{0.1}
\def\y{0.15}
\draw pic (A1) at (0,0) {mytable};
\draw pic (A2) at (\x,-\y) {mytable};
\draw pic (A3) at (2*\x,-2*\y) {mytable};
\draw pic (A4) at (3*\x,-3*\y) {mytable};
\node[fit={($(A1-T.north west)+(-0.1,0.1)$) ($(A4-T.south east)+(0.1,-0.1)$)},myborder] (B1) {};
\node[mylabel] at ($(B1.north)+(0,0.05)$) {lineitem = read_csv()};

\draw pic (A1) at (\s,0) {mytable};
\draw pic at (\s+\x,-\y) {mytable};
\draw pic at (\s+2*\x,-2*\y) {mytable};
\draw pic (A4) at (\s+3*\x,-3*\y) {mytable};]
\node[fit={($(A1-T.north west)+(-0.1,0.1)$) ($(A4-T.south east)+(0.1,-0.1)$)},myborder] (B2) {};
\node[mylabel] at ($(B2.north)+(0,0.05)$) {order_qty};

\draw pic (A1) at (2*\s,0) {mytable};
\draw pic at (2*\s+\x,-\y) {mytable};
\draw pic at (2*\s+2*\x,-2*\y) {mytable};
\draw pic (A4) at (2*\s+3*\x,-3*\y) {mytable};
\node[fit={($(A1-T.north west)+(-0.1,0.1)$) ($(A4-T.south east)+(0.1,-0.1)$)},myborder] (B3) {};
\node[mylabel] at ($(B3.north)+(0,0.05)$) {lg_orders};

\draw[->,ultra thick,draw=gray!40!white] ($(B1.east)+(0.1,0)$) 
    -- node[above,midway,font=\sf\footnotesize,yshift=1mm] {sum} 
    ($(B2.west)+(-0.1,0)$);
\draw[->,ultra thick,draw=gray!40!white] ($(B2.east)+(0.1,0)$)
    -- node[above,midway,font=\sf\footnotesize,yshift=1mm] {filter} 
    ($(B3.west)+(-0.1,0)$);
\draw[->,ultra thick,draw=gray!40!white] ($(B3.east)+(0.1,0)$) 
    -- node[above,midway,font=\sf\footnotesize,yshift=1mm] {join} 
    ($(B3.east)+(0.7,0)$);

\end{tikzpicture}

\vspace{-2mm}
\caption{User-view of evolving data frames (\mdf) and operations on them.
    Each \mdf expresses one or more states.}
\label{fig:edf:user_view}
\vspace{-3mm}
\end{figure}

\subsection{Operations on Evolving Data Frame}
\label{sec:mdf:op}

Our system (\system) implements relational operations such as
    projection (map), join, selection (filter), and aggregation
        in a unique way---to maximize OLA opportunities---as follows.

\paragraph{Map} 

\textsf{\mdf.map()} resembles projection operations such as selecting a subset of attributes,
    creating derived attributes, etc.
What's unique to our \textsf{map()} is that the function in its argument
    is applied to one or more \emph{partitions}
        instead of each row.
Specifically, 
    let \textsf{\mdf = [p1, p2, ..., pK]} where \textsf{pi} is a partition of rows;
then, \textsf{\mdf.map(func)} creates another \mdf{}2
    such that \textsf{\mdf{}2 = [func([p1, p2]), func([p3, p4]), ..., func([pK-1, pK])]}.
That is, \textsf{func}
    maps a data frame to another data frame
where each input data frame is a set of partitions.
Here, the number of partitions passed to each \textsf{func} invocation---two in this example---is
    determined based on partition sizes.
            
There are two reasons behind this design.
First, this approach enables partition-specific (local) operations
    that are less trivial to express, e.g., 
        finding two most ordered items within each order where an order consists of multiple (item, quantity) tuples.
Expressing this using relational operations (in SQL)
    can involve less commonly used functions (e.g., \textsf{group\_concat}~\cite{mysql-group-concat}, \textsf{find\_in\_set}~\cite{mysql-find-in-set}).
Second,
    the approach easily enables efficient processing without additional logic
        for parallelizing/vectorizing row-wise functions.

\paragraph{Join} \textsf{\mdf.join(\mdf{}2, options)}
    joins \mdf with \mdf{}2
        as specified in its \textsf{options}, e.g., 
    method (inner/left) and join keys.
Depending on the join keys and clustering keys,
    \system uses a different join method (i.e., hash or progressive-merge~\cite{dittrich2002progressive}).
Specifically, if both \mdf and \mdf{}2 are clustered on their respective join keys,
    \system performs a merge join;
otherwise, \system performs a hash join with \mdf as the \emph{probe table}
    and \mdf{}2 as the \emph{build table} (used for creating a hash table).
If multiple joins are chained (e.g.,
    \textsf{\mdf.join(\mdf{}2).join(\mdf{}3)})
    and hash joins must be used,
        \system effectively performs the right-deep join
    by constructing hash tables in parallel for \mdf{}2 and \mdf{}3,
which is effective for star schema models~\cite{oracle-right-deep}.


\paragraph{Aggregate} 

\textsf{\mdf.agg(cols, by\_attr)}
    aggregates a group of rows (for each \texttt{by\_attr})
        where \textsf{agg} is one of the allowed aggregate functions.
For Deep OLA, 
    we treat aggregation specially
        because to generate accurate/unbiased estimates,
    the results of partial aggregation 
        may need adjustments
    in consideration of the ratio
        between an observed data frame size
            and the full data frame size,
    while the full data frame size may also be uncertain
        if, for example, the data we are aggregating is a result of another aggregation, 
    thereby requiring further inference.
\cref{sec:processing} and \cref{sec:inference} describe more on our inference logic.

\paragraph{Filter} 

\textsf{\mdf.filter(predicate)}
    resembles the selection operation in relational algebra (or the \textsf{where} clause in SQL);
that is, the operation produces another \mdf consisting only of 
    the rows satisfying the supplied predicate.
Like \textsf{\mdf.map(...)},
    the predicate is applied to one or more \emph{partitions} together.
In general,
    \textsf{filter()} can be understood an alias of \textsf{map()}
        that may produce an empty set as an output.
Specifically,
    for \textsf{\mdf2 = [func([p1, p2]), func([p3, p4]), ..., func([pK-1, pK])]},
        any of \textsf{func} may produce an empty set.

\vspace{2mm}
\noindent
We have described the four operations (i.e., map, join, aggregate, filter)
    from a user's perspective;
however, the internal processing may differ
    based on schemas/operations,
        which we describe in \cref{sec:processing}.

\subsection{Limitation}
\label{sec:model:limit}

There are cases where some operations must block
    (e.g., filtering/joining on mutable attributes)
        to produce correct results while minimizing redundant computations.
While our internal processing logic (\cref{sec:processing})
    can distinguish such cases,
        it may be less straightforward to end users
            especially when they are new to our framework.
To maximally exploit Deep OLA opportunities,
    more advanced users may carefully organize data and operations,
        which may be considered as
    \emph{skill} (like providing join hints in RDBMS);
however, one may argue that
    this means the system is not intelligent enough
        to automatically optimize user operations.
The scope of this work is 
    to construct the foundational building blocks for Deep OLA
without optimizing an end-to-end declarative query
    as performed by RDBMS with cost-based optimizers,
        which we leave as future work.



\section{Internal Processing}
\label{sec:processing}

\begin{figure}[t]

\begin{tikzpicture}

\tikzset{
myline/.style={
    draw=black,
    thick,
},
mylabel/.style={
    font=\footnotesize\sf,
    align=center,
},
}

\node[draw=black,thick,minimum width=50mm, minimum height=25mm] (A) at (0,0) {};

\draw[myline] ($(A.north)+(0,0.2)$) -- ($(A.south)+(0,-0.4)$);
\draw[myline] ($(A.north west)!0.33!(A.south west)+(-0.4,0)$) 
    -- ($(A.north east)!0.33!(A.south east)+(0.2,0)$);
\draw[myline] ($(A.north west)!0.66!(A.south west)+(-0.4,0)$) 
    -- ($(A.north east)!0.66!(A.south east)+(0.2,0)$);
    
\node[mylabel,anchor=north] (XA) at ($(A.south west)!0.25!(A.south east)+(0,-0.1)$) {schema has only\\constant\_attr};
\node[mylabel,anchor=north] (XB) at ($(A.south west)!0.75!(A.south east)+(0,-0.1)$) {schema includes\\mutable\_attr};

\node[mylabel,anchor=center] (LA) at ($(A.north west)!0.165!(A.south west)+(-1.0,0)$) {no growth\\($w = 0$)};
\node[mylabel,anchor=center] (LB) at ($(A.north west)!0.495!(A.south west)+(-1.0,0)$) {sub-linear\\($0 < w < 1$)};
\node[mylabel,anchor=center] (LC) at ($(A.north west)!0.83!(A.south west)+(-1.0,0)$) {linear\\($w = 1$)};

\node[mylabel,anchor=center] at ($(XA.north)!(LA.east)!(XA.south)$) {complete \mdf\\($t = 1$)};
\node[mylabel,anchor=center] at ($(XB.north)!(LA.east)!(XB.south)$) {agg by low-\\cardinality group};
\node[mylabel,anchor=center] at ($(XB.north)!(LB.east)!(XB.south)$) {agg by high-\\cardinality group};
\node[mylabel,anchor=center] at ($(XA.north)!(LC.east)!(XA.south)$) {read(base\_table)};

\end{tikzpicture}

\vspace{-4mm}
\caption{\mdf types categorized by degree of growth $w$ on Y-axis and attribute types on X-axis
    with examples in boxes.
    }
\label{fig:mdf:properties}
\vspace{-4mm}
\end{figure}

Depending on the types of operations,
    our system (\system) takes different approaches to update \idf.

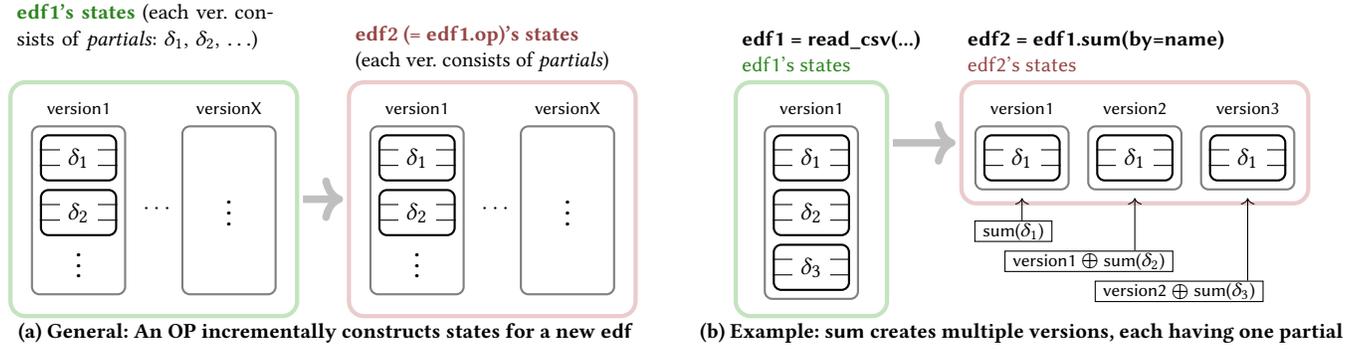
\begin{figure*}[t]

\tikzset{
mylabel/.style={
    font=\footnotesize\sf,
    align=center,
},
mytable/.pic={
    \node[
        draw=black,minimum height=6mm,minimum width=10mm,
        rounded corners=1mm,thick,
    ] (-T) at (0,0) {};
    \draw[draw=black] 
        ($(-T.north west)!0.33!(-T.south west)$) -- ($(-T.north east)!0.33!(-T.south east)$);
    \draw[draw=black] 
        ($(-T.north west)!0.66!(-T.south west)$) -- ($(-T.north east)!0.66!(-T.south east)$);
},
myannotation/.style={
    draw=black,anchor=north west,align=left,
    font=\sf\footnotesize,inner ysep=0.2mm,
}
}

\begin{subfigure}[b]{0.48\linewidth}
\centering
\begin{tikzpicture}

\draw pic (A1) at (0,0) {mytable};
\node[fill=white] at (A1-T.center) {$\delta_1$};
\draw[anchor=north] pic (A2) at ($(A1-T.south)+(0,-0.1)$) {mytable};
\node[fill=white] at (A2-T.center) {$\delta_2$};
\node[anchor=north] (A3) at ($(A2-T.south)+(0,0.1)$) {\textbf{$\vdots$}};
\node[mylabel,anchor=south] at ($(A1-T.north)+(0,0.15)$) {version1};

\coordinate (E) at ($(A3.south west)!(A2-T.south east)!(A3.south east)$);
\node[fit={(A1-T.north west) (E)},draw=gray,thick,rounded corners=1mm] (K1) {};


\node[anchor=west,font=\small] at ($(K1.east)+(0.1,0)$) {\textbf{$\cdots$}};

\draw[,opacity=0] pic (A1) at (2.0,0) {mytable};
\draw[anchor=north,opacity=0] pic (A2) at ($(A1-T.south)+(0,-0.1)$) {mytable};
\node[anchor=north,opacity=0] (A3) at ($(A2-T.south)+(0,0.1)$) {\textbf{$\vdots$}};
\node[] at ($(A2-T.center)+(0,0.1)$) {\textbf{$\vdots$}};
\node[mylabel,anchor=south] at ($(A1-T.north)+(0,0.15)$) {versionX};

\coordinate (E) at ($(A3.south west)!(A2-T.south east)!(A3.south east)$);
\node[fit={(A1-T.north west) (E)},draw=gray,thick,rounded corners=1mm] {};

\node[fit={(-0.8,0.9) (2.8,-2.0)},draw=GreenColor!30!white,ultra thick,rounded corners=2mm] (B1) {};
\node[font=\small,anchor=south west,text=GreenColor!80!black,align=left,text width=35mm] 
    at ($(B1.north west)+(0,0)$) 
        {\textbf{\mdf{}1's states} {\color{black} (each ver.~consists of \emph{partials}: $\delta_1$, $\delta_2$, $\ldots$) }};

\def\s{4.5}
\draw pic (A1) at (\s,0) {mytable};
\node[fill=white] at (A1-T.center) {$\delta_1$};
\draw[anchor=north] pic (A2) at ($(A1-T.south)+(0,-0.1)$) {mytable};
\node[fill=white] at (A2-T.center) {$\delta_2$};
\node[anchor=north] (A3) at ($(A2-T.south)+(0,0.1)$) {\textbf{$\vdots$}};
\node[mylabel,anchor=south] at ($(A1-T.north)+(0,0.15)$) {version1};

\coordinate (E) at ($(A3.south west)!(A2-T.south east)!(A3.south east)$);
\node[fit={(A1-T.north west) (E)},draw=gray,thick,rounded corners=1mm] (K1) {};


\node[anchor=west,font=\small] at ($(K1.east)+(0.1,0)$) {\textbf{$\cdots$}};

\draw[,opacity=0] pic (A1) at (\s+2.0,0) {mytable};
\draw[anchor=north,opacity=0] pic (A2) at ($(A1-T.south)+(0,-0.1)$) {mytable};
\node[anchor=north,opacity=0] (A3) at ($(A2-T.south)+(0,0.1)$) {\textbf{$\vdots$}};
\node[] at ($(A2-T.center)+(0,0.1)$) {\textbf{$\vdots$}};
\node[mylabel,anchor=south] at ($(A1-T.north)+(0,0.15)$) {versionX};

\coordinate (E) at ($(A3.south west)!(A2-T.south east)!(A3.south east)$);
\node[fit={(A1-T.north west) (E)},draw=gray,thick,rounded corners=1mm] {};

\node[fit={(\s-0.8,0.9) (\s+2.8,-2.0)},draw=OrangeColor!30!white,ultra thick,rounded corners=2mm] (B2) {};
\node[font=\small,anchor=south west,text=OrangeColor!80!black,align=left] 
    at ($(B2.north west)+(0,0)$) 
        {\textbf{\mdf{}2 (= \mdf{}1.op)'s states}\\ {\color{black} (each ver.~consists of \emph{partials}) }};

\draw[->,line width=1mm,draw=gray!50!white] 
    ($(B1.east)+(0.05,0)$) -- ($(B2.west)+(-0.05,0)$);

\end{tikzpicture}
\vspace{-2mm}
\caption{General: An OP incrementally constructs states for a new \mdf}
\end{subfigure}
\hfill
\begin{subfigure}[b]{0.48\linewidth}
\centering
\begin{tikzpicture}

\draw pic (A1) at (0,0) {mytable};
\node[fill=white] at (A1-T.center) {$\delta_1$};
\draw[anchor=north] pic (A2) at ($(A1-T.south)+(0,-0.1)$) {mytable};
\node[fill=white] at (A2-T.center) {$\delta_2$};
\draw[anchor=north] pic (A3) at ($(A2-T.south)+(0,-0.1)$) {mytable};
\node[fill=white] at (A3-T.center) {$\delta_3$};
\node[mylabel,anchor=south] at ($(A1-T.north)+(0,0.15)$) {version1};

\coordinate (E) at (A3-T.south east);
\node[fit={(A1-T.north west) (E)},draw=gray,thick,rounded corners=1mm] (K1) {};

\node[fit={(-0.9,0.9) (0.9,-2.0)},draw=GreenColor!30!white,ultra thick,rounded corners=2mm] (B1) {};
\node[font=\small\sf,anchor=south west,text=GreenColor!80!black,align=left] 
    at ($(B1.north west)+(0,0)$) 
        {{\color{black} \textbf{\mdf{}1 = read_csv(...)} } \\ \mdf{}1's states };
        
\def\s{3}
\draw pic (A1) at (\s-0.2,0) {mytable};
\node[fill=white] at (A1-T.center) {$\delta_1$};
\node[mylabel,anchor=south] at ($(A1-T.north)+(0,0.15)$) {version1};
\coordinate (E) at ($(A1-T.south east)+(0,0)$);
\node[fit={(A1-T.north west) (E)},draw=gray,thick,rounded corners=1mm] (K1) {};

\draw[] pic (A1) at (\s+1.3,0) {mytable};
\node[fill=white] at (A1-T.center) {$\delta_1$};
\coordinate (E) at ($(A1-T.south east)+(0,0)$);
\node[mylabel,anchor=south] at ($(A1-T.north)+(0,0.15)$) {version2};
\node[fit={(A1-T.north west) (E)},draw=gray,thick,rounded corners=1mm] (K2) {};

\draw[] pic (A1) at (\s+2.8,0) {mytable};
\node[fill=white] at (A1-T.center) {$\delta_1$};
\coordinate (E) at ($(A1-T.south east)+(0,0)$);
\node[mylabel,anchor=south] at ($(A1-T.north)+(0,0.15)$) {version3};
\node[fit={(A1-T.north west) (E)},draw=gray,thick,rounded corners=1mm] (K3) {};

\node[fit={(\s-0.9,0.9) (\s+3.5,-0.5)},draw=OrangeColor!30!white,ultra thick,rounded corners=2mm] (B2) {};
\node[font=\small\sf,anchor=south west,text=OrangeColor!80!black,align=left] 
    at ($(B2.north west)+(0,0)$) 
        {{\color{black} \textbf{\mdf{}2 = \mdf{}1.sum(by=name)} } \\ \mdf{}2's states };

\coordinate (O) at ($(B1.north east)!(B2.west)!(B1.south east)$);
\draw[->,line width=1mm,draw=gray!50!white] 
    ($(O)+(0.05,0)$) -- ($(B2.west)+(-0.05,0)$);
    
\node[myannotation] (P1)
    at ($(B2.south west)+(0.2,-0.2)$) 
    {sum($\delta_1$)};
\node[myannotation] (P2)
    at ($(B2.south west)+(0.6,-0.6)$) 
    {version1 $\bigoplus$ sum($\delta_2$)};
\node[myannotation] (P3)
    at ($(B2.south west)+(1.8,-1.0)$) 
    {version2 $\bigoplus$ sum($\delta_3$)};

\draw[->,black] ($(P1.north west)!(K1.south)!(P1.north east)$) -- ($(K1.south)+(0,-0.1)$);
\draw[->,black] ($(P2.north west)!(K2.south)!(P2.north east)$) -- ($(K2.south)+(0,-0.1)$);
\draw[->,black] ($(P3.north west)!(K3.south)!(P3.north east)$) -- ($(K3.south)+(0,-0.1)$);

\end{tikzpicture}

\vspace{-2mm}
\caption{Example: \textsf{sum} creates multiple versions,
    each having one partial}
\end{subfigure}

\vspace{-3mm}
\caption{States-based representation of \mdf and operations on them
    (i.e., from \mdf{}1's extrinsic states to \mdf{}2's intrinsic states).
Each \mdf (conceptually) defines a new state
    each time a partial is added to the latest version or
        a new version is created.
An operation on \mdf
    creates a new \mdf (and states for it)
        with minimal computational redundancy.
    }
\label{fig:mdf:states}
\end{figure*}

\subsection{Properties of Evolving DataFrame}
\label{sec:processing:properties}

In addition to the schema described in \cref{sec:mdf},
    each \mdf maintains two additional properties, namely
        \emph{progress} and \emph{growth},
        to characterize its evolution quantitatively.

\paragraph{Progress}

\emph{Progress} ($0 \le t \le 1$)
        is the ratio between the number of (original) \emph{input} tuples that have been read/processed thus far
            and the \emph{total} number of the (input) tuples that must be processed
    to obtain the final answer;
        the total tuple count comes from  metadata (\cref{sec:statistics}).
For example,
    if a base table consists of ten equal-sized partitions
        and we have read/processed only one of them,
    $t$ is 1/10 (= 0.1).
On the other hand, if the entire data (e.g., ten out of ten partitions) is read/processed,
        $t$ is 1.
If $t$ is 1, \textsf{\mdf.is_final=True}.

\paragraph{Growth}

\emph{Growth} describes the growth of the current tuple count to forecast the final tuple count. \system compactly models the growth as a monomial $c t^w$ using past observations. \cref{fig:mdf:properties} gives some examples with different $w$ values. Growth captures the local tuple count, while progress $t$ captures the query input ratio (between the current and the future).
For instance,
    if we are computing an average (without grouping attributes),
        the output tuple count will always be one (unless empty); thus, $w = 0$ and $c = 1$.
    On the other hand, $t < 1$ if we are still reading/processing input data.



\paragraph{Examples}

These variables---$(c, w)$ and $t$---are
    more closely related if, for example,
        an \mdf represents a base table; then, $w$ is equal to $1$ and $c$ is equal to the input size,
    because in this case, the output of this \mdf (or the data this \mdf represents)
        exactly matches the amount of input data retrieved from a data source (e.g., CSV files in a directory).
In other cases, however,
        $w$ may be less than $1$, suggesting sub-linear growth.
For instance,
    if an \mdf
        represents the result of aggregation
    with log-cardinality grouping attributes---\textsf{lineitem.count(by=linestatus)}---the 
number of output rows is less likely to increase 
    (while its aggregate values may change);
    thus, we have $t = 1$.
%
\cref{fig:mdf:properties} classifies the types of 
    \mdf properties based on the degree of growth ($w$)
        and attribute types (constant/mutable).
Its cells list a few examples
    that would result in \mdf{}s with such properties.
For example, if \textsf{\mdf = read(base\_table)},
    its schema consists only of constant attributes and its output size
        grows linearly with input data ($w = 1$).
Another example is
    an \mdf representing the result of aggregation with high-cardinality grouping attributes
        (\textsf{students.count(by=first\_name)}).
If so, attribute values may change,
    and also, new grouping keys may appear (each time a new first name is encountered).
Accordingly, its schema includes mutable attributes, and $w$
    is between $0$ (no growth) and $1$ (linear growth).

\subsection{State Representation}

Internally, an \mdf
    represents an evolving data frame
        with discrete \emph{states}.
There are two types of states: \emph{intrinsic states} and \emph{extrinsic states}.
Extrinsic states express converging/unbiased estimates; accordingly,
        they are consumed by downstream \mdf{}s or other applications,
whereas
    intrinsic states are used to incrementally maintain computed values
        prior to adjustments and/or estimations.

\paragraph{Examples}

Suppose we are counting the number of students by their home states.
Let \mdf{}1 represent the dataset we are reading;
we have read one out of ten equal-sized partitions,
    the first partition contains 2 students from IL and 1 student from MI.
The intrinsic states $\alpha_1$ of \mdf{}1 becomes
    \textsf{[[(id1, IL), (id2, IL), (id3, MI)]]}.
For \mdf{}1, its extrinsic states $\beta_1$ is identical to $\alpha_1$ because \mdf{}1---representing
    tuples from a base table---requires no adjustments.
Let \mdf{}2 represent \textsf{\mdf{}1.count(by=state)};
    its intrinsic states $\alpha_2$ becomes
        \textsf{[[(IL, 2), (MI, 1)]]}.
To express unbiased estimates,
    \mdf{}2's extrinsic states $\beta_2$ is scaled accordingly
        under the assumption that the unobserved (nine) partitions
            have the same distribution as the observed (first) partition;
thus, $\beta_2$ becomes \textsf{[(IL, 20), (MI, 10)]}.

We read one more partition (thus, we have read two partitions);
    the second partition contains 1 student from IL and 1 student from MI.
$\alpha_1 = \beta_1$ becomes
    \textsf{[[(id1, IL), (id2, IL), (id3, MI)], [(id4, IL), (id5, MI)]]} (note that
        the newly added tuples are in a separate list).
To (incrementally) update $\alpha_2$,
    we first aggregate the second list of $\beta_1$,
        temporarily obtaining \textsf{[(IL, 1), (MI, 1)]},
    which is merged into $\alpha_2$ using key-based sum ($\bigoplus$), 
        as described in \cref{sec:evolution:cases},
    finally obtaining \textsf{$\alpha_2$ = [(IL, 3), (MI, 2)]}.
To obtain unbiased estimates from $\alpha_2$,
    we scale individual aggregate values considering the ratio
    between currently processed tuples and the total tuple count (i.e., 2:10),
thereby obtaining \mdf{}2's extrinsic states \textsf{$\beta_2$ = [(IL, 15), (MI, 10)]}.

Note that we have taken two different approaches in updating intrinsic states
    depending on \mdf{}s.
For \mdf{}1, we have inserted new tuples, creating a longer list for $\alpha_1$;
    in contrast, for \mdf{}2, we have replaced the old set of aggregate values 
        with another set of aggregate values.
We systematically distinguish these cases---incremental or complete updates---as follows.

\paragraph{Intrinsic States}

To enable both incremental and complete updates,
    an \mdf's states are organized using \emph{versions} and \emph{partials}
        (a partial is a subset of rows inside each version),
    as shown in \cref{fig:mdf:states}.
Creating a new version means a complete refresh
    while appending partial(s) to each version (of an \mdf)
        means incremental updates.

For example,
    suppose an \mdf---representing \textsf{(first\_name, count)} statistics of 
a class---has a version $\alpha^{(1)}$ and
        the version currently contains one partial,
    where the partial has one tuple
    (e.g., \textsf{[(mike, 4)]}).
We can incrementally update the version by appending another partial (e.g., \textsf{[(sarah, 2)]});
    then, the version $\alpha^{(1)}$ represents two tuples \textsf{[(mike, 4), (sarah, 2)]},
        namely a union of the two partials.

\newcommand{\balpha}{\bm{\alpha}}

Specifically,
    intrinsic states $\balpha$
        is a two-dimensional structure (\cref{fig:mdf:states}),
    consisting of one or more versions ($\alpha^{(1)}, \alpha^{(2)}, \ldots, \alpha^{(v)}$),
        where each version $\alpha^{(i)}$
    contains one or more partials ($\delta_1, \ldots, \delta_p$).
The partials are exclusive from one another with respect to their key;
    that is, the partials \emph{partition} each version,
        which is ensured by each \mdf during operations (\cref{sec:states:updates}).
To obtain the latest intrinsic state,
    we can union all the partials in the latest version ($\balpha^{(v)}$).

\paragraph{Extrinsic States}

Extrinsic states are introduced to distinguish
    (external) estimate values
        from (internal) raw values.
In many operations such as map/filter/join,
    the extrinsic states are simply an alias of intrinsic states
        since those operations do not need any special adjustments to obtain unbiased estimates.
Extrinsic states are required primarily for aggregate operations.
    
There are two types of adjustments.
    The first is 
        when aggregation is \emph{non-mergeable} (e.g., count-distinct),
    requiring different pre-aggregate representations.
Let \textsf{\mdf{}2 = \mdf{}1.count_distinct(name)},
    where \mdf{}1's intrinsic states $\alpha_1$ consist of two partials $\delta_1$ and $\delta_1$.
To incrementally compute count-distinct (i.e., first using $\delta_1$ and then to update it using $\delta_1$),
    it is insufficient to have the number of unique values appearing in $\delta_1$
    because \textsf{count\_distinct($\delta_1$) + count\_distinct($\delta_2$)}
        is \emph{not} equal to \textsf{count\_distinct($\delta_1 \cup \delta_2$)};
    we need to record all the individual unique values in $\delta_1$
        to properly examine if the tuples in $\delta_2$ overlap with any of the values in $\delta_1$.
In this case, the intrinsic states must include a set of unique values,
    which then can be used to incrementally compute count-distinct values
        (finally appearing in extrinsic states).

The second type is when aggregate values are expected to increase/change
    if we observe more tuples in the input:
        currently observed tuples should be treated as a sample.
One example is a sum, as we have already described.
That is, by treating the current raw summation as the ones from a sample,
    unbiased estimates can be obtained in consideration of
            the ratio between the current input cardinality
                and the projected final input cardinality.
This scaling mechanism (called growth-based scaling)
    is described in \cref{sec:inference}.

\subsection{Operation: State Transformation}
\label{sec:states:updates}

An operation in \textsf{\mdf{}2 = \mdf{}1.op(...)}
    is a state transformation process
        from \mdf{}1's extrinsic states
    to \mdf{}2's intrinsic states
(which can then be used to produce \mdf{}2's extrinsic states
    with optional scaling as described above).
    In this section, we describe
        how to transform
a version of extrinsic states $\beta_1 = [\delta_1, \ldots, \delta_p]$
    \emph{incrementally} to a version of intrinsic states $\alpha_2$
    for each operation.

\begin{table}[t]
\begin{center}
\caption{State transformation for each \mdf operation. GBI: growth-based inference.}
\label{tab:merge_rules}
\vspace{-2mm}
\footnotesize
\begin{tabular}{ l | l l l }
\toprule
\textbf{\mdf op} & \textbf{intrinsic repr.} 
    & \textbf{merge ($\bigoplus$)} & \textbf{int. $\rightarrow$ ext.} \\ 
\midrule
map    & mapped tuples   & union      & identity \\
join   & joined tuples   & union      & identity \\
filter & filtered tuples & union      & identity \\
count  & count by key    & sum by key & GBI  \\
sum    & sum by key      & sum by key & GBI  \\
avg    & sum/count by key  & sum/sum by key & GBI  \\
count\_distinct  & count by key    & sum by key & GBI  \\
min    & min by key      & min by key & GBI  \\
max    & max by key      & max by key & GBI  \\
var    & var/sum/count by key      & avg/sum/sum by key & GBI  \\
stddev    & var/sum/count by key      & avg/sum/sum by key & GBI  \\
\bottomrule
\end{tabular}
\end{center}
\vspace{-4mm}
\end{table}

\paragraph{Merge}

To incrementally construct $\alpha_2$ with respect to \textsf{op}
    (when provided $\delta_1, \ldots, \delta_p$ at a time),
we exploit the fact that
    there exists 
        a combination of an intrinsic state representation
            and a \emph{merge} operation ($\bigoplus$) 
    that can satisfy
        \textsf{op([$\delta_1, \ldots, \delta_p$]) = op($\delta_1$) $\bigoplus$ \ldots $\bigoplus$ op($\delta_p$)}.
That is, given $\delta_1$, we can first compute op($\delta_1$);
    then, given $\delta_2$, we update the result by merging op($\delta_2$) into the previous result;
this update operation continues for each partial.

For example, suppose we are computing \textsf{avg([$\delta_1$, $\delta_2$, $\delta_3$])},
    or more specifically, average salary for each state in the United States.
To incrementally compute average,
    we first compute \textsf{(count, sum_salary)} for each state from $\delta_1$,
        which is stored as an intrinsic state.
Given the next partial ($\delta_2$), we (again) compute 
    \textsf{(count, sum_salary)} for each state from $\delta_2$,
then add these aggregates into the earlier results for each state,
    which is equal to directly computing \textsf{(count, sum_salary)}
        from a union of $\delta_1$ and $\delta_2$.
Note that
    for each \textsf{op},
        these intrinsic state representations and merge operations differ,
    which we summarize in \cref{tab:merge_rules}.

\paragraph{Primary Key}

As described in \cref{sec:mdf},
    one or more constant attributes serve as a primary key
to uniquely identify tuples of an \mdf.
Accordingly, our transformation
    always defines a primary key for a newly created \mdf.
\textsf{map/filter/join} retains the same key as the input \mdf.
Upon \textsf{agg}, grouping attributes becomes the key of a new \mdf.

\paragraph{Clustering Key}

A clustering key determines the physical ordering of an \mdf's tuples.
The clustering key changes by aggregation
    if the aggregation's grouping attributes
        are not the clustering key itself.

\paragraph{Other Properties}

Besides attribute types,
    a new \mdf must also maintain two internal properties: progress and growth.
Since progress is a ratio defined using the original input tuples,
    every operation simply propagates the progress value to the next \mdf
        without any modifications.
In contrast,
    growth is newly calculated as part of an operation
        (each time a new partial or a version is consumed)
to accurately estimate the number of tuples that will newly appear in the future.
We discuss this logic in \cref{sec:inference}.





\subsection{Base Table Statistics}
\label{sec:statistics}

The \mdf that represents a base table (by reading data from CSV, Apache Parquet, or others)
    must be provided with (1) a list of file names,
        (2) the number of tuples in each file,
    and (3) attributes with primary/clustering keys corresponding to the tables that are being read.
This is all the metadata that \system requires from the underlying data, without requiring any other statistics.
This metadata information
    is used for computing \textit{progress} ($t$).

\subsection{Closure of \mdf Properties}
\label{sec:estimate:consistent}
\label{sec:estimate:limit}

\system's internal processing is designed to ensure 2Cs (\cref{sec:mdf})
    required to ensure the validity of all \mdf{}s through processing.
Specifically, we satisfy (1) consistency and (2) convergence, as follows.

\paragraph{Consistency}
Every operation in \cref{sec:mdf:op} is a function mapping input \mdf{}(s) to an output \mdf with a fixed schema. 
Because the source of \mdf (\texttt{read} operation) always generates an \mdf with a fixed schema, 
all intermediate and final \mdf{}s have the same schema.



\paragraph{Convergence}
There are two types of convergence: (1) mutable attributes become more accurate, and
    (2) the key set (e.g., group-by attributes) converges.
\quad \textit{Attribute convergence}:
First of all, all the attribute values produced by \system are \emph{convergent}.
    That is, let $\tilde{x}_n$ be an attribute value of an \mdf associated with a certain key 
        after processing up to the $n$-th tuple,
        whereas the exact value---the value we obtain after processing the entire data---is $x$.
Then, two properties hold: first, $\E[|\tilde{x}_n - x|] \le \E[|\tilde{x}_{n'} - x|]$ for $n \le n'$;
    and second, $\tilde{x}_N = x$ where $N$ is the total tuple count.
While desirable, the latter property ($\tilde{x}_N = x$) is often not ensured
    by some existing OLA systems that rely on statistical simulations~\cite{li2016wander}.
Moreover,
for mean-like aggregates (e.g., count, sum, avg, stddev, var), we produce unbiased estimates; that is,
    $\E[\tilde{x}_n - x] = 0$.
For other aggregates (e.g., count-distinct, extreme order statistics like min/max),
    we produce reasonably accurate estimates adopting well-known estimation techniques in the literature~\cite{Haas1995Distinct,Vaart1998AsympStats}.
\textit{Key-set convergence}: 
In approximate computing, a major source of non-existing keys is insufficient
    samples from the input data~\cite{chaudhuri2017approximate}.
Nevertheless, under our framework, the key set converges to the true set
    because our operations are designed to produce the exact answers when the entire input data is observed.




\section{Aggregate Inference}
\label{sec:inference}



Given an \mdf's intrinsic states, aggregate inference produces its extrinsic states.
There are two challenges.
First, group sizes (e.g., the number of students from a certain state)
    may grow in a non-linear way as more input data are processed.
Second, the number of groups may also increase over time (i.e., the number of states).
Third, different types of aggregations often require different estimation mechanisms.
To tackle these challenges, 
    our overall inference logic (\cref{sec:inference-decompose}) decomposes into two parts: 
        cardinality estimator (\cref{sec:inference-cardinality}) and 
        aggregate estimators (\cref{sec:inference-aggregate}).

\subsection{Problem Decomposition}
\label{sec:inference-decompose}

\system formulates aggregate inference as an unbiased estimation problem. 
Using intrinsic states up until current progress $0 \leq t \leq 1$, aggregate inference aims to find per-cell unbiased estimators at final progress $T = 1$. 
Suppose the data frame has $m(t)$ groups, $X_i(t)$ denotes the $i$-th group cardinality (i.e. the number of tuples that have been aggregated into the $i$-th group) at progress $t$.
Although many aggregate attributes may be present, aggregate inference focuses on each attribute at a time, 
    referring to the aggregate values of the $i$-th group as $Y_i(t)$. 
Because unobserved partitions are unknown to \system, 
    $m(t)$, $X_i$, and $Y_i$ are stochastic processes over ``time'' $t$. 
We write the observed group cardinalities and aggregate values until progress $t$ in lower cases: 
    $x_{i, :t}$ and $y_{i, :t}$ respectively. 
The desired unbiased estimator $\yh_{i, :t}$ is the one such that:
\begin{equation}
    \yh_{i, :t} = \E \left[ Y_i(T) \; | \; x_{i,:t}, y_{i,:t} \right]
\end{equation}

Many known aggregate estimators rely on the current count $x_{i, t}$ 
    and the final count $x_{i, T}$; 
however, the latter is not known at the current time. 
Instead, \system computes an unbiased estimator of final group cardinality $\xh_{i, :t}$ 
    from group cardinalities so far $x_{i, :t}$ 
        described in \cref{sec:inference-cardinality}.
\begin{equation}
    \xh_{i, :t} = \E \left[ X_i(T) \; | \; x_{i, :t} \right]
\end{equation}

Using this estimator as well as the current cardinality and aggregate value, 
    \system then estimates the final aggregate value at $T$ 
        by aggregate-aware estimators $f$ in \cref{sec:inference-aggregate}.
\begin{equation}
    \yh_{i, :t} = \E \left[ Y_i(T) \; | \; x_{i,:t}, y_{i,:t} \right] = f(y_{i,:t},  x_{i, :t}, \xh_{i, :t})
\end{equation}

Therefore, \system first estimates $\xh_{i, :t}$ for all $i = 1, \dots, m(t)$, 
    and then estimates $\yh_{i, :t}$. 
In a case of many aggregate attributes, \system reuses $\xh_{i, :t}$ 
    to estimate each aggregate separately by applying the corresponding aggregate estimator. 
Finally, it collects all aggregate estimations into the output data frame filled with extrinsic states.

\subsection{Cardinality Estimator}
\label{sec:inference-cardinality}

\system models group cardinalities after \emph{monomials} with a shared power, 
    $\E[X_i(t)] \propto t^w$. 
The underlying reasoning is as follows. 
    \system assumes that the number of samples and the number of groups follow two hidden monomials, 
        $\E[n(t)] \propto t^{u}$ and $\E[m(t)] \propto t^{v}$, respectively. 
Then, average group cardinality is $\frac{1}{m(t)} \sum_{i=1}^{m(t)} X_i(t) = \frac{1}{m(t)} n(t)$ 
    whose expectation is proportional to $t^{u - v}$,
so 
$w = u - v$. 
This 
modeling 
captures many scenarios in Deep OLA. 
For example, if the input data frame is a table reader, 
    then \system would expect the sample to grow linearly ($u = 1$). 
If the input is behind a cross join of two tables, 
    then \system would expect a quadratic growth ($u = 2$). 
Filtering would then affect the coefficient corresponding to its selectivity. 
On the other hand, if the group key is the same as the clustering key, 
    \system would see the number of groups grow linearly ($v = 1$) 
        as it consumes more partitions. 
A low-cardinality group key would result in a constant ($v = 0$) 
    while a higher-cardinality one would generate something in between ($0 < v \leq 1$). 

Furthermore, this model simplifies its estimation logic. 
    In fact, \system does not need to estimate $\E[n(t)]$ nor $\E[m(t)]$, 
        but only $\E[X_i(t)]$. 
\system estimates 
final group cardinalities in two steps. 
    First, it fits $w$ to the dataset consisting of average group cardinalities $\overline{x}_t = \frac{1}{m_t} \sum_{i=1}^{m_t} x_{i, t}$ for all observed $t$. 
Specifically, it fits the power $w$ as well as the coefficient $b$ in a logarithmic-transformed linear regression: $\E[\log \overline{x}_t] = \log b + w \log t$. 
\system implements a streaming linear regression 
    with $O(1)$ time/space complexities per observation. 
Finally, \system fits each group's coefficient 
    in $\E[X_i(t) | x_{i, :t}] = x_{i, t} = c_i t^w$ 
        and predicts the final group cardinality 
with $T = 1$:
\begin{equation} \label{eq:cardinality-estimate}
    \xh_{i, :t} = \E[X_i(T) | x_{i, :t}] = (x_{i, t} / t^w)\;  T^w = x_{i, t} \; / \; t^w
\end{equation}

\begin{figure*}[t]
\tikzset{
execnode/.style={
    circle,draw=black,minimum width=6mm,
    font=\scriptsize\sf,
    ultra thick,
},
mytable/.pic={
    \node[
        draw=black,minimum height=4mm,minimum width=8mm,
        rounded corners=1mm,thick,
    ] (-T) at (0,0) {};
    \draw[draw=black] 
        ($(-T.north west)!0.33!(-T.south west)$) -- ($(-T.north east)!0.33!(-T.south east)$);
    \draw[draw=black] 
        ($(-T.north west)!0.66!(-T.south west)$) -- ($(-T.north east)!0.66!(-T.south east)$);
},
mylabel/.style={
    font=\scriptsize\sf,anchor=north,inner xsep=0.5mm,
    minimum width=8.5mm,align=left,
},
myarrow/.style={
    draw=GreenColor,ultra thick,->
},
myannotation/.style={
    draw=black,font=\scriptsize,color=RedColor,
    rounded corners=1mm,
},
}

\begin{subfigure}[b]{0.4\linewidth}
\centering
\begin{tikzpicture}
\node[] at (0,0) {
\begin{lstlisting}[
    basicstyle=\ttfamily\footnotesize\linespread{1.1}\selectfont,
    % frame=l,
    % framesep=4.0mm,
    % framexleftmargin=2.0mm,
    fillcolor=\color{GreyColor},
    % rulecolor=\color{RedColor},
    % numberstyle=\scriptsize,
    commentstyle=\color{GreenColor},
    % xleftmargin=0.8cm,
    % columns=fullflexible,
    % numbers=left,
    language=Python,
    % stepnumber=1,
    morekeywords={join,sort,limit}
    ]
lineitem = read_csv('...')          # LI
# item count for each order
order_qty = lineitem.sum(qty, by=orderkey)     # OQ
# select only the large orders
lg_orders = order_qty.filter(sum_qty > 300)    # LO
# find the customers with biggest order sizes
lg_order_cust = lg_orders.join(orders) \       # OO
                  .join(customer)              # OC
# select top-100 customers
qty_per_cust = lg_order_cust.sum(sum_qty, by=name) # C
top_cust = qty_per_cust.sort(sum_qty, desc=True) \
                .limit(100)              # TC
\end{lstlisting}
};
\end{tikzpicture}

\end{subfigure}
\hfill
\begin{subfigure}[b]{0.55\linewidth}
\centering
\begin{tikzpicture}

\def\xs{1.4}
\def\ys{0.0}
\def\ls{0.15}

\node[execnode] (L) at (0,0) {LI};
\node[mylabel,anchor=center] at ($(L.north)+(0,\ls)$) {read_csv};

\draw pic (T1) at ($(L.south)+(0,-0.6)$)  {mytable};
\draw pic (T2) at ($(T1-T.south)+(0,-0.3)$) {mytable};
\draw pic (T3) at ($(T2-T.south)+(0,-0.3)$) {mytable};
\node[mylabel,anchor=north west] (L1) at ($(T3-T.south west)+(0,-0.05)$) 
    {lineitem\\(key: orderkey,\\ linenum)};
\node[
    fit={(T1-T.north west) (L1.south east)},
    draw=gray!40!white,
    thick, rounded corners=1mm, 
    ] (B1) {};
\draw[draw=gray!40!white,ultra thick] (L) -- ($(B1.north west)!(L)!(B1.north east)$);

\node[execnode] (O) at ($(L)+(\xs,\ys)$) {OQ};
\node[mylabel,anchor=center] at ($(O.north)+(0,\ls)$) {sum};
\node[mylabel,anchor=center] at ($(O.south)+(0,-\ls)$) {(key: orderkey)};

\node[execnode] (LO) at ($(O)+(\xs,\ys)$) {LO};
\node[mylabel,anchor=center] at ($(LO.north)+(0,\ls)$) {filter};

\node[execnode] (OO) at ($(LO)+(\xs,\ys)$) {OO};
\node[mylabel,anchor=center] at ($(OO.north)+(0,\ls)$) {(merge-) join};

\node[execnode] (OD) at ($(OO)+(-0.4*\xs,-0.8)$) {OD};
\draw pic (T5) at ($(OD.south)+(0,-0.55)$)  {mytable};
\draw pic (T52) at ($(T5-T.south)+(0,-0.3)$) {mytable};
\node[mylabel,align=right,anchor=north east] (L2) at ($(T52-T.south east)+(0,-0.05)$) 
    {orders\\(key: orderkey)};
\node[
    fit={(T5-T.north east) (L2.south west)},
    draw=gray!40!white,
    thick, rounded corners=1mm, 
    ] (B2) {};
\draw[draw=gray!40!white,ultra thick] (OD) -- ($(B2.north west)!(OD)!(B2.north east)$);

\node[execnode] (OC) at ($(OO)+(\xs,\ys)$) {OC};
\node[mylabel,anchor=center] at ($(OC.north)+(0,\ls)$) {(hash-) join};

\node[execnode] (CS) at ($(OC)+(-0.4*\xs,-0.8)$) {CS};
\draw pic (T6) at ($(CS.south)+(0,-0.6)$)  {mytable};
\node[mylabel,inner xsep=0,anchor=north west] (L3) at ($(T6-T.south west)+(0,-0.05)$) 
    {customer\\(key: custkey)};
\node[
    fit={(T6-T.north west) (L3.south east)},
    draw=gray!40!white,
    thick, rounded corners=1mm, 
    ] (B3) {};
\draw[draw=gray!40!white,ultra thick] (CS) -- ($(B3.north west)!(CS)!(B3.north east)$);

\node[myannotation,anchor=north west,text width=28mm] (A3) 
    at ($(OC.south)+(0.4,-0.6)$) 
    {\textbf{note:} uses the right table (i.e., \textbf{customers}) for a build table.
    The key is still orderkey.};
\coordinate (A3N) at ($(A3.north)+(-1.2,0)$);
\draw[draw=RedColor] ($(A3N)+(0.05,0)$) to[out=90,in=-30] (OC);

\node[execnode] (C) at ($(OC)+(\xs,\ys)$) {C};
\node[mylabel,anchor=center] at ($(C.north)+(0,\ls)$) {sum};
\node[mylabel,anchor=center] at ($(C.south)+(0,-\ls)$) {(key: name)};

\node[execnode] (T) at ($(C)+(\xs,\ys)$) {TC};
\node[mylabel,anchor=center] at ($(T.north)+(0,\ls)$) {sort, limit};


\node[myannotation,anchor=north west,text width=20mm,fill=white] 
    (A1) at ($(B1.north east)+(-0.4,-0.15)$) 
    {\textbf{note:} incrementally read \textsf{\textbf{lineitem}} clustered on \textsf{orderkey} };
\draw[draw=RedColor] (L) -- ($(A1.north west)+(0.2, 0)$);

\node[myannotation,anchor=south east,text width=18mm,fill=white] 
    (A5) at ($(B2.south west)+(0.6,0.6)$) 
    {\textbf{note:} incrementally read \textsf{\textbf{orders}} clustered on \textsf{orderkey} };
\draw[draw=RedColor] (OD) -- ($(A5.north east)+(-0.2, 0)$);

\node[myannotation,anchor=south west,rounded corners=0mm,color=GreenColor,thick,text=black,
    font=\footnotesize] 
    at ($(L.north west)+(-0.4,0.5)$) 
    {\textbf{note:} right arrows indicate message queues between 
        nodes; each node runs on a separate thread};

\draw[myarrow] (L) -- (O);
\draw[myarrow] (O) -- (LO);
\draw[myarrow] (LO) -- (OO);
\draw[myarrow] (OO) -- (OC);
\draw[myarrow] (OC) -- (C);
\draw[myarrow] (C) -- (T);
\draw[myarrow] (OD) -- (OO);
\draw[myarrow] (CS) -- (OC);

\end{tikzpicture}

\end{subfigure}

\vspace{-2mm}
\caption{Example data operations with \mdf (left) and 
    \system's internal representation to process them in parallel (right)
    }
\label{fig:implementation-query-example}
\vspace{-2mm}
\end{figure*}
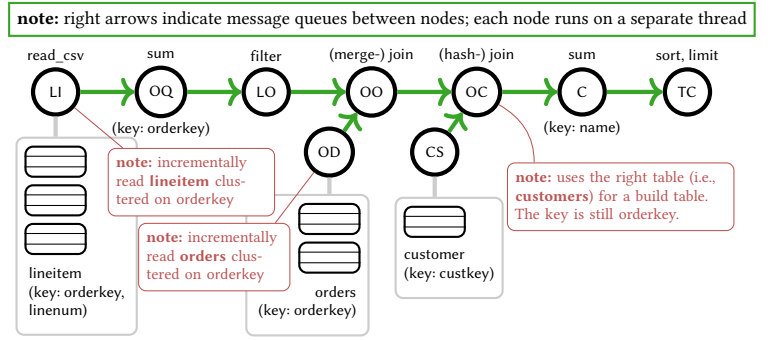

\subsection{Aggregate Estimators}
\label{sec:inference-aggregate}

\system selects the aggregate estimator $f$ from the following set of estimators depending on the aggregation type. This set can be expanded together with existing estimators.

\paragraph{Count} Use the estimated cardinality: $f_{\text{count}}(y_{i,:t}, x_{i,:t}, \xh_{i, :t}) = \xh_{i, :t}$.

\paragraph{Sum} Scale the summation: $f_{\text{sum}}(y_{i,:t}, x_{i,:t}, \xh_{i, :t}) = \frac{y_{i,t}}{x_{i,t}} \xh_{i, :t}$.

\paragraph{Weighted Avg} Weighted averages (e.g., average, variance, standard deviation) are special cases of summation. Because of our choice of estimators, average estimators reduce to the identity function. Let $y'_{i,:t}$ be the weighted summation, $y''_{i,:t}$ be the summation of weights and $y_{i,t} = y'_{i,t} / y''_{i,t}$ be the weighted average:
\begin{equation} \label{eq:scale-weighted-avg}
    f_{\text{avg}}((y'_{i,:t}, y''_{i,:t}), x_{i,:t}, \xh_{i, :t}) = \left( \frac{y'_{i,t}}{x_{i,t}} \xh_{i, :t} \right) / \left( \frac{y''_{i,t}}{x_{i,t}} \xh_{i, :t} \right) = y_{i,t}
\end{equation}
\

\paragraph{Count Distinct} 

\system adopts a finite-population method-of-moment estimator~\cite{Haas1995Distinct} 
    (in \cref{sec:processing:properties}, denoted as $\hat{D}_{MM1}$). 
For brevity in this subsection, 
    let us focus on the $i$-th group and 
        shorten the notations of current group cardinality $x = x_{i,t}$, 
            final estimated group cardinality $X = \xh_{i, :t}$, 
    and current group count distinct $y = y_{i, t}$. 
\system computes $f_{cd}(y_{i,:t}, x_{i,:t}, \xh_{i, :t}) = Y$ where $Y$ satisfies \cref{eq:mm1-main}.
\begin{equation} \label{eq:mm1-main}
    y_{i, t} = Y ( 1 - h(\xh_{i, :t} / Y) )
\end{equation}

$h(z)$ is defined below. To solve the equation, \system runs Newton-Raphson iterations until convergence with a tolerance and at most a finite number of steps. Each iteration involves evaluating the numerical approximation of gamma and digamma functions.
\begin{equation}
    h(z) = \frac{\Gamma(\xh_{i, :t} - z + 1) \, \Gamma(\xh_{i, :t} - x_{i,t} + 1)}{\Gamma(\xh_{i, :t} - x_{i,t} - z + 1) \, \Gamma(\xh_{i, :t} + 1)}
\end{equation}

\paragraph{Order Statistics} 
Order statistics include min, max, median, quantiles, and $k$-th smallest/largest values. 
Currently, \system simply outputs the latest value: 
    $f_{\text{order}}(y_{i,:t}, x_{i,:t}, \xh_{i, :t}) = y_{i,t}$ 
        which provides a fairly accurate estimate for large $\xh_{i,:t}$ at no computation cost.

\subsection{Correctness}
\label{sec:inference-analysis}

Given observations $(y_{i,:t}, x_{i,:t})$ up until current progress $t$, \cref{lemma:unbiased-count} and \cref{lemma:unbiased-agg} together show that \system's aggregate inference is unbiased under some conditions.

\begin{restatable}[Unbiased Count]{lemma}{lemmaunbiasedcount}
\label{lemma:unbiased-count}
    $\xh_{i, :t} = \E[X_i(T) | x_{i, :t}] = \frac{x_{i,t}}{t^w}$ is unbiased, if A) $w$ is unbiased and B) all operations produce a monomial or transform a monomially growing input(s) into a monomially growing output with respect to progress $t$.
\end{restatable}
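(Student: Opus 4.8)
The plan is to treat condition B as an inductive invariant and condition A as the device that removes the bias introduced by estimating the power $w$, after which the claim reduces to a one-line expectation computation. First I would make the monomial model explicit: by condition B, the $i$-th group's observed cardinality satisfies $\E[x_{i,t}] = c_i\, t^w$ for a group-specific coefficient $c_i$ and the shared power $w$, where the expectation is over the randomness in which input partitions have been consumed up to progress $t$. This is exactly the monomial assumption of \cref{sec:inference-cardinality}, and condition B is what guarantees the form is preserved as the \mdf{} flows through \textsf{map}, \textsf{filter}, \textsf{join}, and \textsf{agg}; I would invoke it as a hypothesis rather than re-derive it here, noting that its discharge is the operation-by-operation closure argument of \cref{sec:processing}.

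Next I would identify the coefficient with the estimation target. Evaluating the model at $T = 1$ gives $\E[X_i(T)] = c_i\, T^w = c_i$, and since $X_i(T)$ is the deterministic final group cardinality once the entire input is processed, we have $c_i = X_i(T)$. The estimator under scrutiny is $\xh_{i, :t} = x_{i,t}/t^w$ from \cref{eq:cardinality-estimate}, so the desired identity $\E[\xh_{i, :t}] = X_i(T)$ collapses to showing $\E[x_{i,t}/t^w] = c_i$. With the true power $w$ this is immediate: $\E[x_{i,t}/t^w] = \E[x_{i,t}]/t^w = c_i t^w / t^w = c_i = X_i(T)$.

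The substantive step, and the one I expect to be the main obstacle, is that $w$ is not known but estimated by $\hat{w}$ from the log-linear regression on the average cardinalities $\overline{x}_t$. Because $t^{\hat{w}}$ sits in the denominator, unbiasedness of $\hat{w}$ does not transfer automatically: the map $w \mapsto t^{-w} = \exp(-w \log t)$ is convex, so Jensen's inequality shows $\E[x_{i,t}/t^{\hat{w}}] \neq \E[x_{i,t}]/t^{\E[\hat{w}]}$ in general. This is precisely where condition A must be used. I would argue that $\hat{w}$ is fit from the aggregate-level statistic $\overline{x}_t$, which is (conditionally) independent of the centered per-group fluctuation $x_{i,t} - c_i t^w$, so that conditioning on $\hat{w}$ factorizes the expectation and the unbiasedness of $\hat{w}$ for the true $w$ then yields $\E[x_{i,t}/t^{\hat{w}}] = c_i$. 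I would state this independence/conditioning assumption carefully, since it is the hinge of the argument, and close by combining it with the two identities above to conclude $\E[\xh_{i, :t}] = X_i(T)$, leaving the monomial-closure induction of condition B to \cref{sec:processing}.
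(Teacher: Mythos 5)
Your hinge step fails. You correctly identify Jensen's inequality as the obstacle to plugging the fitted $\hat w$ into $x_{i,t}/t^{\hat w}$, but the fix you propose --- independence of $\hat w$ from the centered per-group fluctuation together with $\E[\hat w] = w$ --- does not close the gap. Under exactly those assumptions, $\E[x_{i,t}/t^{\hat w} \mid \hat w] = c_i\, t^{w-\hat w}$, hence $\E[x_{i,t}/t^{\hat w}] = c_i\,\E\!\left[e^{(w-\hat w)\ln t}\right] > c_i\, e^{\E[w-\hat w]\ln t} = c_i$ by strict convexity of the exponential whenever $\hat w$ is non-degenerate and $t \neq 1$. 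So your own assumptions imply a strictly upward-biased estimator, contradicting the conclusion you draw: independence lets the expectation factorize, but it cannot move an expectation through the nonlinear map $w \mapsto t^{-w}$. The paper never confronts this issue at all --- its proof reads hypothesis A as licensing the substitution of the true power $w$ throughout, and only verifies that the log-linear OLS fit is unbiased under a multiplicative-error model $\overline{X}_t = c\, t^w e^{\eps}$ with $\E[\eps]=0$; the estimated-versus-true-exponent question is left outside the lemma's scope. You cannot raise that question and then dispose of it with a step that is false.

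Separately, what your first two paragraphs establish is not the statement being proved. You treat $X_i(T)$ as deterministic, set $c_i = X_i(T)$, and show frequentist unbiasedness $\E[x_{i,t}/t^w] = X_i(T)$, averaging over which partitions have been consumed by progress $t$. The lemma, and the paper's proof, assert the conditional-expectation identity $\E[X_i(T) \mid x_{i,:t}] = x_{i,t}/t^w$, where $X_i(\cdot)$ is a stochastic process: condition B gives $\E[X_i(t)] = c_i t^w$ for every \mdf; conditioning on the observed value forces $\E[X_i(t)\mid x_{i,:t}] = x_{i,t}$ and hence $c_i = x_{i,t}/t^w$; evaluating the monomial at $T = 1$ then yields the estimator. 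In your framing, where $X_i(T)$ is a constant, the lemma's chain $\xh_{i,:t} = \E[X_i(T)\mid x_{i,:t}] = X_i(T)$ would force the random quantity $x_{i,t}/t^w$ to equal a constant, which is impossible --- so your reinterpretation changes the claim rather than re-proving it. Deferring the operation-by-operation justification of condition B (and the sufficient condition behind A) is acceptable since they are hypotheses, but the two issues above are genuine gaps.
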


\ignore{
\begin{proof}
    To justify A), $w$ is unbiased if a sufficient condition holds: $\overline{X}_t = \frac{1}{m} \sum_{i=1}^{m(t)} X_i(t) = c_i t^w e^\eps$ where $e^\eps$ is a multiplicative error with $\E[\eps] = 0$ and independence to observations. Notice that $\log \overline{X}_t = \log c_i + w \log t + \eps$. By unbiasedness of the ordinary least square estimator, the estimated $w$ coincides with the expectation of $w$.

    To justify B), we show that all operations in \cref{sec:model} can satisfy the property possibly with some additional conditions depending on the operation. First, a data source produces monomial rows by implementation since it computes progress $t$ from the number of rows read divided by the final number of rows. Mapping trivially preserves the cardinality. Joins and filters preserve monomial relationship if each tuple has uniform selectivity and the selectivity is independent of the progress. Lastly, aggregation preserves monomial relationship if the groupby key is either uniformly distributed across progress or is a clustered key.
    
    If B) holds, $\E[X_i(t)] = c_i t^w$ for all \mdf including the input to the aggregation. Because \system has already observed $x_{i, t}$ at $t$, $\E[X_i(t) | x_{i,:t}] = \E[X_i(t) | x_{i,t}] = x_{i, t}$, implying $c_i = x_{i,t} / t^w$ and leading to \system's group cardinality estimator when evaluated at $t = T = 1$.
\end{proof}
}

\begin{restatable}[Unbiased Aggregation]{lemma}{lemmaunbiasedagg}
\label{lemma:unbiased-agg}
    Given the unbiased group cardinality estimate $\xh_{i, :t} = \E[X_i(T) | x_{i, :t}]$, \system's aggregate estimators produce unbiased estimates, possibly with additional conditions depending on aggregation type: $\E \left[ Y_i(T) \; | \; x_{i,:t}, y_{i,:t} \right] = f(y_{i,:t},  x_{i, :t}, \xh_{i, :t})$.
\end{restatable}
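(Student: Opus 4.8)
The plan is to prove the claim by case analysis over the aggregate types for which $f$ is defined in \cref{sec:inference-aggregate}, since $f$ is specified piecewise. Throughout, I would condition on the observed history $(x_{i,:t}, y_{i,:t})$, so that $x_{i,t}$, $y_{i,t}$, and the cardinality estimate $\xh_{i,:t}$ (supplied unbiasedly by \cref{lemma:unbiased-count}) are all fixed; the goal in each case is to show that the conditional expectation of the final aggregate $Y_i(T)$ equals the deterministic value $f(y_{i,:t}, x_{i,:t}, \xh_{i,:t})$.

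The count case is immediate: the count aggregate of group $i$ is by definition its cardinality, so $Y_i(T) = X_i(T)$ and $\E[Y_i(T) \mid x_{i,:t}, y_{i,:t}] = \E[X_i(T) \mid x_{i,:t}] = \xh_{i,:t} = f_{\text{count}}$ directly from \cref{lemma:unbiased-count}. The sum case is the technical core. I would write the final sum as the observed part plus a sum over the $X_i(T) - x_{i,t}$ unseen tuples, $Y_i(T) = y_{i,t} + \sum_{j \text{ unseen}} v_{ij}$, and take the conditional expectation. Under the representativeness assumption---that the unseen per-tuple values share the conditional mean $y_{i,t}/x_{i,t}$ of the observed ones, and that the number of unseen tuples is conditionally independent of their values---linearity of expectation gives $\E[\sum_{j \text{ unseen}} v_{ij}] = (\xh_{i,:t} - x_{i,t})\,(y_{i,t}/x_{i,t})$, which telescopes to $\E[Y_i(T) \mid \cdot] = (y_{i,t}/x_{i,t})\,\xh_{i,:t} = f_{\text{sum}}$.

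For weighted averages (and hence variance and standard deviation), I would \emph{not} argue unbiasedness of a ratio directly---since the expectation of a ratio is not the ratio of expectations---but instead invoke the identity reduction already displayed in \cref{eq:scale-weighted-avg}: scaling numerator and denominator by the same factor $\xh_{i,:t}/x_{i,t}$ cancels, collapsing the estimator to the current sample mean $y_{i,t}$. It then suffices to assume the observed sample mean is representative, i.e.\ unbiased for the final mean, which is exactly the additional condition the lemma permits. For count-distinct I would delegate to the finite-population method-of-moment estimator $\hat{D}_{MM1}$ of Haas et al.~\cite{Haas1995Distinct}: under its sampling model (the current cardinality being a uniform sample of the final), solving \cref{eq:mm1-main} returns the estimator whose approximate unbiasedness is established there, so I would transcribe those conditions rather than reprove them. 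For order statistics the estimator returns the latest value $y_{i,t}$; here exact unbiasedness does not hold, so I would state only the approximate/asymptotic guarantee that the sample extreme approaches the population extreme as $\xh_{i,:t}$ grows, matching the ``fairly accurate for large $\xh_{i,:t}$'' remark.

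The main obstacle is making the ``additional conditions'' precise without overclaiming. The delicate points are: (i) for sum, separating the randomness of the final cardinality from that of the per-tuple values, so that the independence and representativeness hypotheses are stated exactly strongly enough to let the product factorize into $\E[X_i(T) - x_{i,t}] \cdot (y_{i,t}/x_{i,t})$; (ii) for avg, resisting the temptation to claim ratio-unbiasedness and leaning entirely on the algebraic cancellation so that only sample-mean representativeness is invoked; and (iii) for count-distinct and order statistics, being explicit that these are only approximately unbiased and that their guarantees are inherited from the cited estimators rather than derived here.
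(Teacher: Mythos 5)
Your proposal is correct and takes essentially the same route as the paper's proof: the identical case analysis, with count following immediately from \cref{lemma:unbiased-count}, sum proved by separating the randomness of the final cardinality from that of the per-tuple values (the paper factors the full random-length sum $\sum_{j=1}^{X_i(T)} U_{i,j}$ via stationarity, Wald-style, while you split into observed plus unseen tuples---equivalent hypotheses, identical conclusion), weighted average via the cancellation of \cref{eq:scale-weighted-avg}, count-distinct delegated to the estimator of~\cite{Haas1995Distinct} under its equal-frequency assumption, and order statistics given only an asymptotic guarantee. If anything, your treatment of the average is slightly more careful than the paper's, which asserts unbiasedness ``by expectation linearity'' under independence of elements and weights without flagging the ratio-of-expectations subtlety that you explicitly sidestep.
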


\ignore{
\begin{proof}
    This is true for $f_{count}$ because $\xh_{i, :t}$ is given.
    
    Assuming the terms $\{U_{i,j}\}_{j=1}^{X_i(t)}$ in summation are stationary with a constant $\E[U_{i,j}] = \E[U_i]$, $f_{sum}$ is unbiased. Let $Y_i(t) = \sum_{j=1}^{X_i(t)} U_{i,j}$, we can use stationarity to factor $\E[Y_i(T) \; | \; x_{i,:t}, y_{i,:t}] = \E[X_i(T) \; | \; x_{i,:t}] E[U_i \; | \; x_{i,:t}, y_{i,:t}]$. By expanding the conditional expectation, $E[U_i \; | \; x_{i,:t}, y_{i,:t}] = \frac{y_{i,t}}{x_{i,t}}$ which equates the final result to $f_{sum}$.
    
    Assuming that elements and weights are independent, weighted average estimate $f_{avg}$ is unbiased by expectation linearity.
    
    Assuming equal-frequency assumption (i.e. samples are uniformly distributed into $Y$ distinct values), $f_{cd}$ is an unbiased estimator as described in~\cite{Haas1995Distinct}. Additionally, the same work also derives an estimator for skewed frequency, left for future improvements and experiments.

    The $q$-th sample quantile is asymptotically normal around the $q$-th population quantile with variance $q(1-q) / (X_i(t) p_q^2)$ where $p_q$ is the density of the quantile value studied in section 21.2 of~\cite{Vaart1998AsympStats}. Assuming sufficiently large sample size $x_{i,t} = \omega(1/p_q^2)$, which is typical in OLA due to its data size, the sample quantile is unbiased with small variance.
\end{proof}
}

\noindent
Please find the proofs in our extended manuscript~\cite{deepola-tech}.



\subsection{Alternatives}
\label{sec:inference-alternative}

This section lists some of the alternative design choices we have considered but do not fit well with the broader picture of \system.

\paragraph{Probabilistic Cardinality Estimator} One could model the distribution $X_i(T) | x_{i,:t}$ (instead of the expectation $\E[X_i(T) | x_{i,:t}]$ in \system) to express confidence. However, the evaluation would require computing the marginal expectation which may be expensive for many aggregate estimators. Moreover, which distribution to use is an open question to be investigated further.



\paragraph{Other Cardinality Function Families} Different families of polynomials are attractive alternatives; 
    however, one needs to know the set of orders \emph{a priori} to efficiently fit their coefficients. 
Mixing exponential and logarithm could improve the accuracy in some cases 
    but would also be more difficult to estimate. 
In contrast, affine functions are simple with many well-known estimation algorithms, but they are only restricted to a specific growth pattern. Ultimately, monomial is the simplest and cheap to fit (in logarithmic scale) 
    yet provides a wide range of growth curves.


\paragraph{Order Statistics under Finite Population} 

Given PDF/CDF,
    there exists a density function of the $k$-th order statistic~\cite{david2004order}. 
Given that,
    we could evaluate the expectation at $\xh_{i, :t}$ numerically 
        to acquire an unbiased estimate. 
A similar analysis is possible for discrete variables as well. 
However, this method has a prohibitive computational cost in general to reconstruct PDF/CDF, let alone evaluating the expectation. 
For example, if the reconstruction uses kernel density estimation (KDE)~\cite{rosenblatt1956remarksKDE,parzen1962estimationKDE} and empirical distribution function (eCDF), 
    it would require $O(n(t))$ time and space to compute the density and hold all samples. 
Such a cost does not scale well and may straggle OLA progress with minimal accuracy gain.

\section{Confidence Interval for Deep OLA}
\label{sec:confidence}

WAKE's mathematical concepts and implementation can be
    extended to offer confidence intervals
in addition to mean estimates.


\paragraph{Extended Definition}
\system maintains the ``uncertainty'' of all mutable attributes throughout processing in three steps: (1) it computes the uncertainty of initial mutable attributes, (2) propagates the uncertainty through \mdf operations, and 
    (3) derives CIs from the final uncertainty. 
Specifically, we extend the \mdf definition
    (\cref{sec:mdf}):
\begin{equation}
    \texttt{df\_ci  := (list((attr1, attr2, \dots, attrM)), }\Sigma \texttt{)}
\end{equation}
where 
a \emph{covariance matrix} 
$\Sigma$ captures the uncertainty
    with $\Sigma_{i,j}$ denoting 
the covariance between mutable attributes 
    attr$_i$ and attr$_j$. 

\paragraph{Initial Variance} 
When mutable attributes first appear,
\system infer variances
using the existing aggregation-specific variance estimators. 
Specifically, it measures the variance of 
cardinality power $\Var(w)$ by 
    calculating the variance of ordinary least square parameter~\cite{hayashi2000finiteols}; 
sum and avg by applying central limit theorem~\cite{polya1920CLT}; 
count-distinct using Poissonization on empirical density function~\cite{Bogachev2008variancecountdistinct};
order statistics using bootstrapping~\cite{efron1979bootstrap}; and 
extreme order statistics (min/max) by 
    fitting generalized extreme value distribution~\cite{kotz2000GEV}.

\paragraph{Variance Propagation}
\system propagate $\Sigma$ through edf operations using a standard statistics technique: ``propagation of uncertainty''~\cite{Ku2010NotesPropError}. For a differentiable mapping $v = f(u)$ and known covariance matrix $\Sigma^U$, \system linearizes $f$ using first-order Taylor expansion and computes $\Sigma^V = J \Sigma^U J^\T$ where $J$ is the Jacobian matrix ($J_{i,k} = \partial f_i / \partial U_k$). \cref{eq:propagation-of-uncertainty} expands this expression.
\begin{equation} \label{eq:propagation-of-uncertainty}
    \Sigma^V_{i,j} = \sum_k \sum_l 
    \; \Sigma^U_{k, l} 
    \; (\partial f_i / \partial U_k) \; (\partial f_j / \partial U_l)
\end{equation}

\noindent
For instance, the covariance matrix propagates through \texttt{count}
and \texttt{sum} as follows, respectively:
\begin{align}
    \Var(f_{\text{count}}) 
    &= \Var(\xh_{i, :t}) = (\xh_{i, :t} \ln(1/t))^2 \Var(w) \\
    \Var(f_{\text{sum}}) &= 1 / x_{i,t}^2 \left( \Var(y_{i,t}) \xh_{i, :t}^2 + \Var(\xh_{i, :t}) y_{i,t}^2 \right)
\end{align}
Please see our extended manuscript~\cite{deepola-tech} for other operations (e.g.,
    weighted avg, count distinct, order statistics, map/projection).
\paragraph{Variance-based Confidence Interval}
Finally, \system derives a CI of an estimate $y$ from its variance $\sigma^2 = \Var(y)$ based on Chevbyshev's inequality~\cite{chebychev1867ineq}. It outputs $[y - k \sigma, y + k \sigma]$ where $k = \sqrt{1 / (1 - \delta)}$ for confidence level $(1 - \delta)$. For example, $k \approx 4.5$ for 95\% CI.

\paragraph{Limitations} 
The above method applies to differentiable operations,
    which include the most  we are interested in.
Like the mean estimates, finite-variance uncertainty
    also assumes that the distribution of the observed data 
        represents that of the unobserved,
a fundamental premise of machine learning and
    statistical inference.
CI calculation incurs time and memory overheads to compute \cref{eq:propagation-of-uncertainty}; however,
these overheads are relatively small for TPC-H queries 
    because only  a small number of covariances are relevant.

\paragraph{Alternatives} 
The variance propagation can be substituted with 
higher moments for enhanced accuracy;
however, its time complexity (and runtime overheads) 
    increases.
Other alternatives, such as bootstrapping or 
    the propagation of parametric distributions,
either incur significant overhead or 
are applicable to limited operations.




\newcommand{\eof}{\textsf{EOF}\xspace}
\newcommand{\queryservice}{\textsf{Query Service}\xspace}
\newcommand{\executionengine}{\textsf{Execution Engine}\xspace}

\section{Implementation}
\label{sec:system}
\system's implementation in Rust can be majorly divided into two parts:
    (1) \queryservice which lets the user build a query and 
    (2) \executionengine which executes the built query in an OLA manner.

\subsection{\queryservice}
Users express a query as an execution graph composed of \textit{nodes} representing different operations, and \textit{edges} representing the data flow path between these nodes. 
A node has as many incoming edges (representing the operation's inputs) as the number of arguments appearing in an operation. 
For example, a \textsf{join} operation requires two incoming edges representing the \mdf{}s to be joined, whereas a basic \textsf{filter} operation requires one incoming edge.
To support the \mdf operations described in \cref{sec:mdf:op}, 
    \system implements different node types 
        such as \textsf{reader}, \textsf{merge-join}, \textsf{hash-join},
    \textsf{aggregator}, etc., allowing its users to express a large variety of queries. 
%
The edges are implemented using \textsf{channels} for sending a stream of messages across threads.
The user can incrementally add nodes and edges to the query graph representing nested OLA ops.

The circles in \cref{fig:implementation-query-example} represent the nodes and the green arrows represent the edges. 
The \textsf{LI}---\textsf{read_csv}---in the figure
    serves as the root,
        which passes fetched partitions
    to its subscriber (i.e., \textsf{OQ}),
        which continues as defined in the graph.
The current leaf node (i.e., \textsf{TC}) represents the query output,
        which can be consumed by downstream applications (e.g., progressive visualization).



\begin{figure*}[ht]
\centering
\pgfplotstableread{
query	polars	presto	postgres	wake-final	wake-first
1	45.378	23.9058	78.65	10.89	0.32
2	4.286	7.7356	16.85	10.32	7.8
3	31.152	38.9243	49.96	10.27	0.66
4	18.336	23.8792	48.59	5.36	0.23
5	33.223	64.8515	167.3	10.77	0.32
6	24.941	8.4505	30.48	5.81	0.15
7	40.577	73.2183	223.32	9.75	0.5
8	40.296	109.9238	140.46	11.92	1.2
9	69.608	144.8978	94.23	41.7	3.84
10	35.461	54.037	52.65	30.06	1.8
11	5.222	7.2693	23.49	2.15	0.79
12	30.663	15.524	43.76	7.84	0.23
13	32.001	20.3074	24.77	35.03	3.78
14	26.504	9.7277	32.82	8.6	0.55
15	42.331	17.5047	46.11	10.27	0.16
16	3.794	5.2871	17.8	16.65	4.74
17	31.619	44.5593	141.01	86.95	83.88
18	25.737	96.2936	164.74	13.82	0.49
19	48.689	23.1446	34.71	33.75	1.55
20	35.071	26.9678	803.71	87.31	2.09
21	55.721	172.5658	245.21	15.09	0.63
22	7.783	5.2246	3.32	13.89	1.23
}\ExactLatencyScaleFifty

\pgfplotstableread{
query	postgres_mean	postgres_std	presto_mean	presto_std	polars_mean	polars_std	polars-mem_mean	polars-mem_std	wake-final_mean	wake-final_std	wake-first_mean	wake-first_std	wake-second_mean	wake-second_std	wake-mem_mean	wake-mem_std	vertica_mean	vertica_std	actian_mean	actian_std	actian_wo_mean	actian_wo_std
1	182.0411	1.351252789	48.26	0.52	21.008	0.181585609	45.5912916	0.872627832	19.46022908	0.350938653	0.4349436	0.040412366	0.570314458	0.047230094	15.33433855	4.147169388	23.3817018	0.100751719	9.740866	0.01946244867	10.3227156	0.03180991092
2	39.5294	0.980933818	23.11	0.47	2.077	0.033015148	2.9268384	0.013304207	2.094680032	0.06584204	2.094680032	0.06584204	2.081676033	0.078137716	5.434282182	0.268734411	11.0647318	0.1193756603	0.4186795	0.01567926153	5.0163372	0.02466440127
3	109.9819	2.295597547	83.31	0.65	10.737	0.130388309	27.00161	0.031622924	15.8974662	0.133167874	0.842835865	0.11763537	0.906652674	0.117437234	5.889857818	0.437371788	19.9674706	0.8777221036	7.1816466	0.02325063146	29.9051334	0.07701924726
4	117.8691	3.218945461	52.68	1.14	8.129	0.095388096	12.333088	0.056148281	7.401458116	0.128267994	0.221764159	0.023397286	0.252892647	0.041307267	3.391436	0.197428391	23.6728597	0.336831797	1.4211187	0.01092069298	23.2930912	0.09482039816
5	3781.2457	264.2223332	138.81	2.11	10.165	0.115686358	26.0368704	0.060931397	23.06639352	0.192653404	0.686326571	0.143410154	0.858547429	0.143270887	6.783041818	1.081891748	18.8454344	0.3522859255	5.9725366	0.01202102156	28.3696244	0.07214350503
6	68.1121	1.608339755	16.6	0.57	7.026	0.0665332	22.7101208	0.065499272	8.518824098	0.124431069	0.103176345	0.007966327	0.158844563	0.018709996	1.304637091	0.059150268	16.4645385	0.6262723272	7.7021782	0.01531916458	7.729628	0.01701684764
7	1458.5954	42.61756647	164.55	5.38	0	0	0	0	12.37770382	0.182354161	0.671568348	0.24568835	0.707467157	0.2360739	4.732300727	0.243454725	20.1382418	0.5755270818	6.3922486	0.01793506703	29.2495188	0.2010660931
8	332.6001	8.792307603	247.92	1.93	19.411	0.115993295	37.806682	0.075446268	24.55403745	0.242885096	1.557477017	0.25306501	1.680356128	0.26062544	13.31928945	1.540805074	23.2121521	0.7829464039	7.1974192	0.005453933751	34.4987468	0.1125507517
9	191.8368	1.977745057	321.27	4.14	0	0	0	0	79.49075433	0.422076848	4.558736101	0.322676269	5.208640775	0.302030041	25.754244	0.879286258	89.7018787	0.9753448229	11.4442936	0.04077203503	47.1228594	0.1868583309
10	109.7914	1.203451998	109.34	1.53	14.518	0.21596296	22.4570744	1.06019971	109.7266176	0.978550864	3.670756696	0.211895509	3.85876055	0.192030372	19.03323091	0.655333897	48.253437	1.002184127	17.4398776	0.7151059733	64.4122994	0.8697143077
11	54.9514	1.21515561	13.79	0.15	2.637	0.04667857	3.7102888	0.036161912	3.151594917	0.05411685	0.114406764	0.009628911	0.191436908	0.010922832	3.376234182	0.091772251	2.675829	0.03100795746	0.4309528	0.01394210287	2.000592	0.01573850982
12	93.6888	0.868997098	31.98	0.75	9.113	0.058509259	28.1392904	0.21304661	13.58349239	0.22160512	0.209299683	0.010215007	0.336138598	0.064916939	5.8067	0.248205247	21.857195	0.09374773954	2.1332451	0.02121337014	25.6695238	0.2339499681
13	56.7403	1.493829162	53.65	0.55	23.034	0.210089928	32.9666496	0.105466756	139.1003625	1.544304392	6.99054686	0.815944822	12.54319974	1.312600243	29.84551927	0.678785276	29.8452849	0.6051261781	5.1479392	0.01729542707	77.114077	0.5503126002
14	65.551	0.426614059	22.66	0.45	8.466	0.051467358	24.2324136	0.122512766	13.37254577	0.115165674	0.956298269	0.053883118	1.03075781	0.047591573	3.171357091	0.040823785	22.3460716	0.2467499656	6.108706	0.01473433435	10.790304	0.06412858849
15	132.7422	0.508144511	36.04	0.64	13.627	0.112551223	24.0322812	0.094329276	32.97110674	0.25568995	0.226519122	0.020768969	0.349602852	0.060426228	6.424643273	0.123547628	23.8561444	0.7232244089	5.7496561	0.009949495469	5.7769134	0.02687862418
16	51.9492	0.262503249	7.94	0.14	3.225	0.076194196	3.5541992	0.046188676	28.24876391	0.483787305	1.558853896	0.121499757	1.896943985	0.13370134	10.26218473	0.546036964	4.8108355	0.1762994067	0.542677	0.01662948999	4.1748996	0.02110086639
17	300.557	8.546759204	104.59	10.17	20.953	0.099000561	34.1108424	0.082091633	10.81740671	0.084496229	10.66225953	0.084973249	10.6635859	0.084814541	2.797753091	0.165653612	19.9641511	0.3387594785	6.8464414	0.01021931512	12.4517064	0.1580358586
18	376.8785	6.599011563	200.73	3.18	45.408	0.58440473	44.7904536	0.579118024	29.73917339	0.299842211	0.958993837	0.240856448	1.078369802	0.215195856	12.69195927	0.185714183	175.9816978	1.790988868	5.4805551	0.03430677133	36.9930468	0.1086266338
19	62.303	2.464419382	46.36	0.43	14.297	0.207634829	46.6258264	0.102098638	20.36112868	0.140976837	0.852750759	0.12444021	0.94096397	0.15208301	4.477033455	0.158982902	29.1094704	0.4630159726	8.8011425	0.01239429754	20.9570276	0.09381609412
20	2785.0508	111.6741733	58.17	0.51	19.827	0.249802144	23.2947208	0.122434137	379.7484911	5.405191948	2.31424476	0.225667279	2.314042471	0.290401369	21.6566632	0.392652355	27.0499598	0.491718655	6.13621	0.01137383315	21.0568248	0.1165011077
21	1061.2006	29.39361766	352.77	9.01	88.599	0.485350961	45.2795452	0.283629268	30.61035963	0.215628451	0.705807656	0.089498496	1.147408431	0.187662778	13.7142244	0.472227103	84.7812601	1.324885656	5.4919648	0.06102268128	25.5845268	0.109118782
22	6.3497	0.15225421	9.28	0.24	14.492	0.201539	6.1808916	0.026219767	46.27614936	0.422253256	1.542939738	0.06895119	2.418432479	0.077470547	8.7027496	0.165874905	13.0869127	0.09954275913	0.6150123	0.01868332382	7.9358778	0.3770330829
}\ExactLatencyScaleHundred

\begin{tikzpicture}
\begin{axis}[
    height=37mm,
    width=\linewidth,
    ymode=log,
    ymin=0.1,
    ymax=10000,
    log origin=infty,
    bar width=2pt,
    ybar=0.2pt,
    enlarge x limits=0.025,
    xlabel=TPC-H Query Number (of original queries),
    xlabel near ticks,
    xlabel shift=-1mm,
    xtick = {1,2,3,4,5,6,7,8,9,10,11,12,13,14,15,16,17,18,19,20,21,22},
    xticklabels = {q1,q2,q3,q4,q5,q6,q7,q8,q9,q10,q11,q12,q13,q14,q15,q16,q17,q18,q19,q20,q21,q22},
    ytick = {0.1, 1, 10, 100, 1000, 10000},
    yticklabels = {100ms, 1s, 10s, 100s, 1000s, 10000s},
    xticklabel style = {yshift=1mm,font=\sf\footnotesize},
    xtick pos=left,
    ylabel=\textsf{Query Latency},
    ylabel near ticks,
    ylabel style={align=center},
    label style={font=\footnotesize\sf},
    ylabel shift=-2mm,
    legend style={
        at={(0.5, 0.95)},anchor=south,column sep=2pt,
        draw=black,fill=white,line width=.5pt,
        font=\scriptsize,
        /tikz/every even column/.append style={column sep=5pt}
    },
    legend columns=-1,
    colormap name=bright,
    every axis/.append style={font=\footnotesize},
    ymajorgrids,
    ylabel near ticks,
    legend image code/.code={%
        \draw[#1, draw=none] (0cm,-0.1cm) rectangle (0.6cm,0.1cm);
    },
]

\addplot plot [
    fill=BlueColor,draw=none,error bars, y dir=both,y explicit]
table[x=query,y=postgres_mean,y error=postgres_std] {\ExactLatencyScaleHundred};
\addlegendentry{PostgreSQL}

\addplot plot [
    fill=YellowColor,draw=none,error bars, y dir=both,y explicit]
table[x=query,y=presto_mean, y error=presto_std] {\ExactLatencyScaleHundred};
\addlegendentry{Presto}

\addplot plot [
    fill=GreenColor,draw=none,error bars,y dir=both,y explicit
] 
    table[x=query,y=vertica_mean, y error=vertica_std] {\ExactLatencyScaleHundred};
\addlegendentry{Vertica}

\addplot plot [
    fill=RedColor,draw=none,error bars, y dir=both,y explicit
    ] 
    table[x=query,y=polars_mean, y error=polars_std] {\ExactLatencyScaleHundred};
\addlegendentry{Polars}

\addplot plot [
    fill=PurpleColor,draw=none,error bars, y dir=both,y explicit
    ] 
    table[x=query,y=actian_wo_mean, y error=actian_wo_std] {\ExactLatencyScaleHundred};
\addlegendentry{Actian Vector}

\addplot plot [
    fill=BrownColor,draw=none,error bars,y dir=both,y explicit
] 
    table[x=query,y=wake-final_mean, y error=wake-final_std] {\ExactLatencyScaleHundred};
\addlegendentry{\system{}\textbf{-final} (ours)}

\addplot plot [
    fill=PinkColor,draw=none,error bars,y dir=both,y explicit
] 
    table[x=query,y=wake-second_mean, y error=wake-second_std] {\ExactLatencyScaleHundred};
\addlegendentry{\system{}\textbf{-first} (ours)}

\end{axis}
\end{tikzpicture}

\vspace{-2mm}
\caption{Comparison of different baselines on TPC-H 100 GB dataset. The results are averaged across 10 runs.
}
\label{fig:exact-total-latency}
\vspace{-2mm}
\end{figure*}
\begin{figure*}[ht]
    \centering
    \input{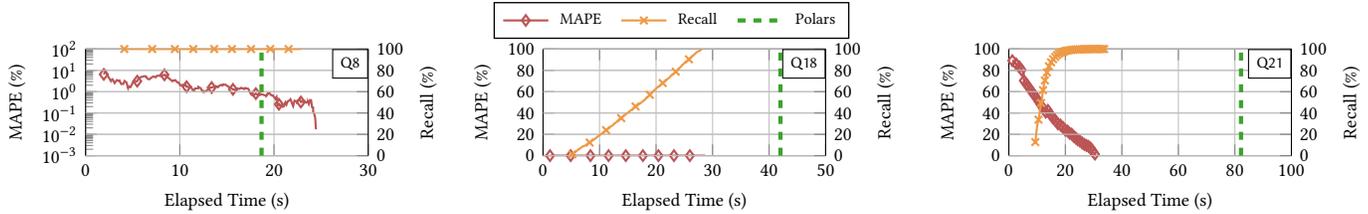}
    \def\subplotwidthThree{0.3\linewidth}
    \def\height{30mm}
    \def\hspaceGap{0.04\linewidth}
    \pgfplotsset{
        accfigLeft/.style={
            height=\height,
            width=\linewidth,
            ymode=linear,
            ymin=0.0,
            ymax=100.0,
            xmin=0,
            ytick={0, 20, 40, 60, 80, 100},
            axis y line*=left,
            xlabel=Elapsed Time (s),
            xlabel near ticks,
            ylabel=MAPE (\%),
            ylabel near ticks,
            ylabel style={align=center},
            legend columns=2,
            colormap name=bright,
            every axis/.append style={font=\footnotesize},
            xmajorgrids,
            ymajorgrids,
            ylabel near ticks,
        },
        accfigLeft/.belongs to family=/pgfplots/scale,
    }
    \pgfplotsset{
        accfigRight/.style={
            height=\height,
            width=\linewidth,
            axis x line=none,
            axis y line*=right,
            enlarge x limits=0.1,
            ymin=0.0,
            ymax=100.0,
            xlabel=Elapsed Time (s),
            xlabel near ticks,
            ylabel={Recall (\%)},
            ylabel near ticks,
            ylabel style={align=center},
            legend style={
                at={(1.45,0.5)},anchor=west,column sep=2pt,
                draw=black,fill=white,line width=.5pt,
                font=\scriptsize,
                /tikz/every even column/.append style={column sep=5pt}
            },
            legend columns=1,
            colormap name=bright,
            every axis/.append style={font=\footnotesize},
            every non boxed x axis/.append style={x axis line style=-},
            xmajorgrids,
            ymajorgrids,
            ylabel near ticks,
        },
        accfigRight/.belongs to family=/pgfplots/scale,
    }
    \begin{subfigure}[b]{\linewidth}
    \begin{tikzpicture}
        \begin{axis}[
                ticks=none,
                width=\linewidth,
                hide axis,
                xmin=10,
                xmax=50,
                ymin=0,
                ymax=0.4,
                legend style={
                    at={(0.55,0)},
                    anchor=center,
                    column sep=2pt,
                    draw=black,fill=white,line width=.5pt,
                    font=\scriptsize,
                    /tikz/every even column/.append style={column sep=5pt}
                },
                legend columns=-1,
            ]
            
            \node[align=center, opacity=0] {
            \addlegendimage{wakecolor,mark=diamond,thick}
            \addlegendentry{MAPE};
            \addlegendimage{wakerecallcolor,mark=x,thick}
            \addlegendentry{Recall};
            \addlegendimage{polarcolor,ultra thick,dashed}
            \addlegendentry{Polars};
            };
        \end{axis}
    \end{tikzpicture}
    \end{subfigure}
    \vspace{-5mm}
    
    \begin{subfigure}[b]{\subplotwidthThree}
        \centering
        \begin{tikzpicture}
            \begin{axis}[
                accfigLeft,
                ymode=log,
                ymin=0.001,
                ymax=100,
                ytick={0.001, 0.01, 0.1, 1, 10, 100},
                xmin=0.0,
                xmax=30.0,
            ]
                \addplot[mark=diamond, color={wakecolor}, thick, mark repeat=20]
                table[x=time_avg_s,y=mape_p] {\powerHundredqVIII}; \label{fig:II_accuracy_rp_curve_A_mape}
            \end{axis}
            
            \begin{axis}[
                accfigRight,
                xmin=0.0,
                xmax=30.0,
            ]
            
            
                \addplot[mark=x, color={wakerecallcolor}, thick,  mark repeat=20]
                table[x=time_avg_s,y=recall_p] {\powerHundredqVIII};
                
                \node[draw=black, fill=white, anchor=north east, align=right, font=\scriptsize\sf] 
                    at (rel axis cs:1.00,1.00) {Q8};
                
                \addplot[ultra thick, dashed, samples=50, smooth,domain=0:6,color={polarcolor}] coordinates {(19.411,0)(19.411,100)};
            \end{axis}
        \end{tikzpicture}
    \end{subfigure}
    \hspace{\hspaceGap}
    \begin{subfigure}[b]{\subplotwidthThree}
        \centering
        \begin{tikzpicture}
            \begin{axis}[
                accfigLeft,
                ymode=linear,
                ymin=0.0,
                ymax=100.0,
                xmin=0.0,
                xmax=50.0,
                xtick={0, 10, 20, 30, 40, 50},
            ]
                \addplot[mark=diamond, color={wakecolor}, thick, mark repeat=20]
                table[x=time_avg_s,y=mape_p] {\powerHundredqXVIII}; \label{fig:II_accuracy_rp_curve_B_mape}
            \end{axis}
            
            \begin{axis}[
                accfigRight,
                xmin=0.0,
                xmax=50.0,
            ]
            
            
                \addplot[mark=x, color={wakerecallcolor}, thick, mark repeat=20]
                table[x=time_avg_s,y=recall_p] {\powerHundredqXVIII};
                
                \node[draw=black, fill=white, anchor=north east, align=right, font=\scriptsize\sf] 
                    at (rel axis cs:1.00,1.00) {Q18};
                
                \addplot[ultra thick, dashed, samples=50, smooth,domain=0:6,color={polarcolor}] coordinates {(45.408,0)(45.408,100)};
            \end{axis}
        \end{tikzpicture}
    \end{subfigure}
    \hspace{\hspaceGap}
    \begin{subfigure}[b]{\subplotwidthThree}
        \centering
        \begin{tikzpicture}
            \begin{axis}[
                accfigLeft,
                xmin=0.0,
                xmax=100.0,
            ]
                \addplot[mark=diamond, color={wakecolor}, thick, mark repeat=5]
                table[x=time_avg_s,y=mape_p] {\powerHundredqXXI}; \label{fig:II_accuracy_rp_curve_C_mape}
            \end{axis}
            \begin{axis}[
                accfigRight,
                xmin=0.0,
                xmax=100.0,
            ]
            
            
                \addplot[mark=x, color={wakerecallcolor}, thick, mark repeat=5]
                table[x=time_avg_s,y=recall_p] {\powerHundredqXXI};
                
                \node[draw=black, fill=white, anchor=north east, align=right, font=\scriptsize\sf] 
                    at (rel axis cs:1.00,1.00) {Q21};
                
                \addplot[ultra thick, dashed, samples=50, smooth,domain=0:6,color={polarcolor}] coordinates {(88.599,0)(88.599,100)};
            \end{axis}
        \end{tikzpicture}
    \end{subfigure}
    \vspace{-2mm}
    \caption{\system's approximation error measured in mean absolute percentage error (MAPE) and recall over time. Vertical lines represent completion time of exact methods. From left to right (Q8, Q18, Q21), \postgres completes in (332s, 376s, 1061s), while \presto completes in (247s, 200s, 352s) respectively.}
    \label{fig:II_accuracy_rp_curve}
    \vspace{-2mm}
\end{figure*}

\subsection{Execution Engine}
The \executionengine takes a \queryservice and evaluates the query on a specified dataset generating a sequence of \mdf outputs.
On specifying the input for each \textsf{read_csv} node, the execution engine starts the query execution.
Each node operates in a separate thread,
    reading messages from its input channels.
A received message consists of: (1) a shared pointer to a data frame and (2) metadata containing information on the progress of the query execution (\cref{sec:processing}). The node processes these messages, updates its intrinsic states using the metadata, and writes its extrinsic states along with the metadata as a message to the output channels. In case there are no messages on a node's input channel, the node blocks on the channel read.
A special message type---\eof---is used to indicate the end of inputs on a given channel. 
Once an execution node receives \eof messages on its input channels, the node sends an \eof message on its output channels and terminates its execution. 





\subsection{Discussion}
\label{sec:discussion}

\paragraph{Query Optimization.}
As its first step, \system introduces Deep OLA primitives without 
    a declarative language.
We plan to adopt existing optimization techniques
    such as predicate pushdown, join order optimization, etc.,
but will also investigate unique opportunities.
For example, join algorithms (e.g., sort-merge or hash)
    affects not only the performance but the way that intermediate results are delivered.
If a subsequent group-by uses the same key as a join,
    we may opt for merge join for more interactive results
        even if a hash join can produce final results more quickly.
        


\paragraph{OLA-Specific Optimizations.}
\system's design includes OLA-specific optimizations such as (1) pipelined implementation of a query's operations, 
(2) sort-merge join when both the tables are partitioned on a common clustering key (e.g., \textsf{lineitem} and \textsf{orders}), 
(3) re-using the hash-table of right (build) tables for repeated hash-join, 
(4) shared pointers of data to reduce cloning costs, etc. 
These optimization help \system produce exact answers
    as quickly as other systems designed for exact query processing.

\paragraph{Intra-Query Parallelism.}
\system benefits from both data pipelining and multi-threading,
    which are widely employed in data systems to reduce task completion time~\cite{meehan2015s,toshniwal2014storm}.
Our extended report~\cite{deepola-tech} studies the benefit of data pipelining.
In the future, we will investigate Deep OLA-specific distributed processing.


\section{Evaluation}
\label{sec:exp}

This section evaluates \system
    against (conventional) exact data systems 
        as well as OLA systems. 
Our experiments show the following:
\begin{itemize}
\item \system's first estimate is \mediantimebest faster than the exact systems
    while being \medianoverheadbest slower in producing exact results. (\cref{sec:exp:latency})
\item \system's first estimates have a median error of \medianfirsterror. \system provides results with under 1\% error \medianmultipleonepercent faster on average (upto \maxmultipleonepercent) 
    than the best exact data system. (\cref{sec:exp:error})
\item \system produces the results with less than 1\% error, \medianmultipleonepercentola times faster
    than state-of-the-art OLA systems. (\cref{sec:exp:ola})
\item \system's CIs comply with the chosen level of confidence but can be conservative towards the end of processing. (\cref{sec:exp:ci})
\item \system executes deep queries at expected time complexity. (\cref{sec:ablation-synthetic-deep-queries})
\item \system's performance can be further improved by optimally choosing the partition sizes. (\cref{sec:ablation-partition})
\end{itemize}

\subsection{Experimental Setup}
All the experiments are performed on a Standard D16ads v5 (Azure) machine with 16 vCPU(s) and 64 GB of memory. The TPC-H Benchmark is used to compare and evaluate the different systems.


\paragraph{Baselines} We employ 2 state-of-the-art OLA and 4 exact systems.
\begin{enumerate}

\item \progressivedb~\cite{berg2019progressivedb}: 
A middleware-based OLA system for non-nested, join-free queries.
We use the authors' implementation~\cite{progressivedb-implementation} (currently limited to a single table) and evaluate on TPC-H queries (Q1, Q6) 
    for benchmark (\cref{fig:II_accuracy_curve_progressivedb}).
        

\item \wanderjoin~\cite{li2016wander}: 
A random walk-based OLA system for non-nested, multi-join queries.
We use the authors' implementation~\cite{wanderjoin-implementation} 
    with their modified queries (Q3, Q7, Q10) (\cref{fig:II_accuracy_curve_wanderjoin}).

\item \presto~\cite{sethi2019presto}: 
    \presto is a data warehouse
        designed for diverse data sources.
We use its Hive connector on HDFS.

\item \postgres: A popular RDBMS. 
We create appropriate FKs and indexes on the attributes according to the TPC-H schema.

\item \polars~\cite{polars}: \polars is a data frame library in Rust
optimized with SIMD, Arrow~\cite{arrow}, 
    lock-free parallel hashing, etc. 

\item \vertica~\cite{lamb2012vertica}: \vertica is a columnar storage-based analytical database. 
We use Vertica Data Warehouse on Azure.

\item \actianvector~\cite{zukowski2012vectorwise}: \actianvector is a high-performance vectorized analytical database.

\end{enumerate}

\paragraph{File Format} 
For \presto, \polars, and \system, we use Parquet~\cite{parquet}. 
\wanderjoin and \progressivedb are implemented on top of \postgres.

\paragraph{Dataset} 
We use a scale-100 (100 GB) TPC-H dataset~\cite{tpch}.
%
%
For \system,
    the dataset is partitioned into 512 MB chunks,
each of which is then converted to Apache Parquet format.

\paragraph{Queries} 
We use the 22 TPC-H queries to evaluate the different systems and compare their performance. We employ TPC-H queries for two reasons. First, consistency with the existing work for accurate comparison. 
Second, many TPC-H queries can be considered \textit{deep} since existing OLA methods cannot handle them 
    due to nested \texttt{select} statements --- except for Q1 and Q6.
Relatedly, our comparison against existing OLA uses modified Q3, Q7, and Q10 (\cref{sec:exp:ola});
however, for studying our system in \cref{sec:exp:latency} and \cref{sec:exp:error},
    we use the original queries without simplifications.
Finally, we evaluate \system additionally with systematically generated deeper queries (\cref{sec:ablation-synthetic-deep-queries}).

\paragraph{Metrics}
The following metrics are computed:
\begin{itemize}
\item Final-Result Latency: The time taken to process the complete dataset and produce a correct final result.

\item First-Estimate Latency: The time for the first estimate.

\item Peak-Memory Usage: For in-memory \polars and \system, we compute the peak-memory usage (i.e. maximum resident-set size).

\item MAPE: Mean Absolute Percentage error is used to calculate the approximation error. 

\item Recall: For group-by queries, recall measures the fraction of final-result groups that were correctly produced.
\end{itemize}




\input{figures/experiments/II-2_ola_approx_error_trend}

\subsection{Interactive Querying \& Low Overhead}
\label{sec:exp:latency}

We compare \system's latency against different exact query processing systems in terms of the time taken to produce first estimates and final exact results. 
\cref{fig:exact-total-latency} shows the time taken by \system 
    to obtain the first and the final result. 
The number of intermediate results varies depending on the number of partitions of the tables the query uses. 
\system produces first estimates \mediantimeactian faster than \actianvector's exact answers, \mediantimepolars faster than \polars's exact answers, \mediantimevertica faster than \vertica's exact answers, \mediantimepresto faster than \presto's exact answers 
    and \mediantimepostgres faster than \postgres's exact answers (all median).
In terms of slowdown, 
    measured as the ratio of \system's final-result latency and 
other baseline's final-result latency, 
    the median slowdown against 
    \actianvector is \medianoverheadactian,
    against \polars is \medianoverheadpolars, 
    against \vertica is \medianoverheadvertica,
    against \presto is \medianoverheadpresto and against \postgres is \medianoverheadpostgres,
        meaning \system produces exact answers 
    even faster than \vertica, \presto and \postgres.
\system---despite being an OLA system---produces exact results
    more quickly 
than \postgres in 20 queries, \presto in 17 queries, and \polars in 6 queries.

Moreover,
    \system has low peak memory utilization.
\polars runs out of memory (on the experiment machine) 
    for queries Q7 and Q9, not able to generate results, 
whereas \system successfully produces exact query results. 
On average, \system's peak memory usage is \avgmemory less than \polars (up to \maxmemory less for some queries), providing the ability to handle larger datasets.

Specifically looking at some of the queries, 
    Q9, Q10, and Q13 require 
building hash tables 
    for smaller right tables 
    before being able to produce first-result, 
thus have smaller improvement. 
Q2 and Q17 require computing sub-queries' aggregate and 
    thus have negligible gains (but almost zero overhead).
In terms of total query latency, 
    computational overhead is most prominent in Q10 and Q13 (due to group-by on high-cardinality $c\_custkey$) 
and Q20 (due to repeated filter on $partsupp$).
For the first-estimate latencies,
    \system has a median error of \medianfirsterror. 
\system provides estimated results with less than 1\% MAPE, 
    \medianmultipleonepercent faster than the final result time of the best baseline.
WAKE's fast query performance benefits from
    our manual optimization such as predicate pushdown
and careful choices of join methods (\cref{fig:implementation-query-example}); expectedly, we have observed
    poorer latencies with inferior query plans such as
filtering after joins. This motivates our future work as discussed in \cref{sec:discussion}.

\subsection{\system's Approximation Error Analysis}
\label{sec:exp:error}

In this experiment, we analyze approximate errors
    of \system's OLA outputs (as it processes more data over time)
    in terms of MAPE and recall error.
\cref{fig:II_accuracy_rp_curve} shows time-error curves for a few representative queries in three  different categories, as follows.


The first category includes queries on non-clustering group-by keys with low cardinality. 
Overall, their MAPE curves decrease over time 
    as \system observes more data 
    while recalls reach 100\% early on. 
Many queries in TPC-H fall into this category: Q1, Q4--Q9, Q12, Q14, Q17, Q19, and Q22. 
In particular, Q8 (\cref{fig:II_accuracy_rp_curve}-left) involves 
    a weighted average group-by aggregation over multiple joined tables. 
\system is able to answer the first estimate at 1.9s with a 6.5\% error. 
    When \polars completes (at 19s), \system has 0.87\% error.


The queries in the second category
    involves
    clustering group-by keys; 
therefore, their aggregation values are exact (MAPE at 0\%) while their recalls increase as \system retrieve keys in different partitions. 
Q3, Q18 (\cref{fig:II_accuracy_rp_curve}-right), and Q20 are examples:
    their recalls increase linearly 
as more keys are retrieved/observed.


Third, 
the third category is a combination of the first two;
    that is, their errors 
can be understood with 
    MAPE, recall, and/or precision. 
For instance,
    Q10, Q16, and Q21 (\cref{fig:II_accuracy_rp_curve}-right) 
        have quickly rising recall curves 
    while their MAPE curves drop only linearly 
because their group-by keys are diverse, leading to a lower number of samples per group 
    and so lower prediction power.
While figures are omitted,
    Q11 has a perfect MAPE score but its recall/precision curves increase 
        quickly toward the end. 
Q2 and Q15 have on-off recall and precision due to their uses of arguments of the minima and maxima.
%
Finally,
    Q13 computes count over a high-cardinality non-clustering key (\textsf{c_custkey}), 
    followed by an outer group-by over the inner count. 
Because the inner count changes over different partitions, the growth within outer groups can be non-monotonic, 
    violating \system's cardinality estimator and resulting in a relatively large MAPE. 
We will address this limitation in the future.

\begin{figure}[t]
    \centering
    \input{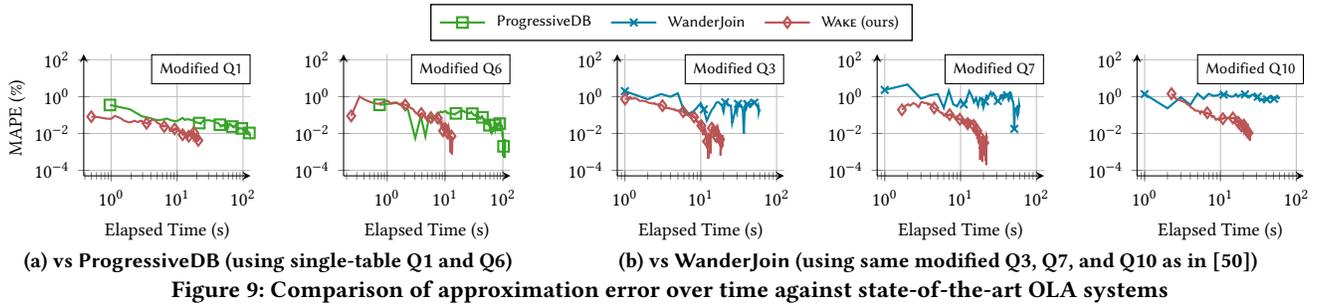}
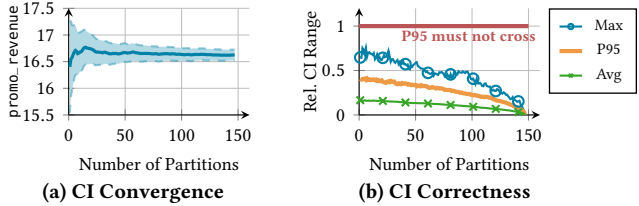
    \def\subplotwidthA{0.47\linewidth}
    \def\subplotwidthB{0.47\linewidth}
    \def\height{30mm}
    \def\hspaceGap{0.00\linewidth}
    \pgfplotsset{
        accfigLeft/.style={
            height=\height,
            width=\linewidth,
            ymode=linear,
            xmin=0,
            xmax=160,
            axis lines=left,
            xlabel=Number of Partitions,
            xlabel near ticks,
            ylabel near ticks,
            ylabel shift=-2mm,
            ylabel style={align=center},
            colormap name=bright,
            every axis/.append style={font=\footnotesize},
            xmajorgrids,
            ymajorgrids,
            ylabel near ticks,
            legend columns=1,
            legend style={
                at={(1.05,1.0)}, anchor=north west,
                column sep=2pt,
                draw=black,fill=white,line width=.5pt,
                legend image post style={scale=0.5},
                font=\scriptsize,
                /tikz/every even column/.append style={column sep=5pt}
            },
        },
        accfigLeft/.belongs to family=/pgfplots/scale,
    }
            
    \hspace{-8mm}
    \begin{subfigure}[b]{\subplotwidthA}
        \centering
        \begin{tikzpicture}
            \begin{axis}[
                accfigLeft,
                ylabel=\texttt{promo_revenue},
                ymin=15.5,
                ymax=17.5,
            ]
                \addplot[name path=cipos, mark=none, color=BlueColor, thick, dashed,  mark repeat=20, opacity=0.4]
                table[x=idx,y=x_pos] {\wakeHundredqXIVc};

                \addplot[mark=none, color=BlueColor, very thick,  mark repeat=20]
                table[x=idx,y=x] {\wakeHundredqXIVc};
            
                \addplot[name path=cineg, mark=none, color=BlueColor, thick, dashed,  mark repeat=20, opacity=0.4]
                table[x=idx,y=x_neg] {\wakeHundredqXIVc};
                
                \addplot [BlueColor!30!white] fill between [
                    of = cipos and cineg,
                    soft clip={domain=0:147},
                ];
            \end{axis}
        \end{tikzpicture}
        \vspace{-2mm}
        \caption{CI Convergence}
        \label{fig:II_ci_convergence}
    \end{subfigure}
    \hspace{\hspaceGap}
    \begin{subfigure}[b]{\subplotwidthB}
        \centering
        \begin{tikzpicture}
            \begin{axis}[
                accfigLeft,
                ylabel=Rel. CI Range,
                ymin=0.0,
                ymax=1.2,
            ]
                \addplot[mark=o, color=BlueColor, thick,  mark repeat=20]
                table[x=idx,y=rel_max] {\wakeHundredqXIVc};
                \addlegendentry{Max};

                \addplot[mark=none, color=YellowColor, ultra thick,  mark repeat=20]
                table[x=idx,y=rel_p95] {\wakeHundredqXIVc};
                \addlegendentry{P95};
            
                \addplot[mark=x, color=GreenColor, thick,  mark repeat=20]
                table[x=idx,y=rel_avg] {\wakeHundredqXIVc};
                \addlegendentry{Avg};

                \draw[RedColor,ultra thick] (axis cs: 0,1) -- (axis cs: 150,1);
                \node[font=\scriptsize\bf, anchor=north west, text=RedColor] 
                    at (axis cs: 30,1.05)
                {P95 must not cross};
                
            \end{axis}
        \end{tikzpicture}
        
        \vspace{-2mm}
        \caption{CI Correctness}
        \label{fig:II_ci_correctness}
    \end{subfigure}
    \vspace{-4mm}
    \caption{\system's 95\% confidence interval on Q14 ($k \approx 4.5$).}
    \label{fig:II_ci}
    \vspace{-3mm}
\end{figure}

\subsection{Faster \& More Accurate than Existing OLA}
\label{sec:exp:ola}

This section compares \system with other OLA systems, \progressivedb and \wanderjoin,
on all of their supported sets of TPC-H queries: Q1, Q6 for \progressivedb, and Q3, Q7, Q10 for \wanderjoin.
\cref{fig:II_accuracy_curve_progressivedb} shows the results from \progressivedb on  Q1 and  Q6. Although the initial estimates of \progressivedb and \system are close, \system converges $2.5\times$ faster than \progressivedb to a less than 1\% relative error.
\cref{fig:II_accuracy_curve_wanderjoin} shows the comparison against \wanderjoin. Although the errors of the first estimate are comparable, the convergence of \system to a less than 1\% relative error is $1.51\times$ faster than \wanderjoin.
Moreover, \system soon converges to exact answers whereas \wanderjoin stays around 1\% relative
    errors, which are expected because its random walk-based
sampling mechanism is not designed in such a way.
We believe the approach taken by \system---converging to exact answers---is 
    more desirable for end users.


\subsection{Confidence interval correctness}
\label{sec:exp:ci}
    
We empirically verify \system's CI derivation (\cref{sec:confidence}) 
    by executing Q14 with shuffled input partitions to simulate the inputs in unexpected orders. 
Q14 expresses a weighted average over a join of two tables with filters. 
Our computed confidence intervals converged toward the mean estimates as shown in \cref{fig:II_ci_convergence}.
Moreover, we investigate the quality of those confidence intervals (\cref{fig:II_ci_correctness}),
with \emph{relative CI ranges}:
the fraction of the actual absolute error over the size of the CI $|\yh - y| / (k \sigma)$. 
Initially, our P95 relative CI range is around $0.4$ as expected because our Chevbyshev-based CI assigns $k \approx 4.5$.
Later, relative CI ranges decrease over partitions; while being overly conservative, they safely bound the true answers.

\begin{figure}[t]
\centering

\pgfplotstableread{
depth	wake-first_mean	wake-first_std	wake-mid_mean	wake-mid_std	wake-final_mean	wake-final_std	vertica_mean	vertica_std	actian_mean	actian_std
0	0.035	0.003	0.224	0.008	1.946	0.030	1	0	1	0
1	0.056	0.004	0.353	0.012	2.958	0.046	1	0	0.1761354	0.01076236898
2	0.040	0.004	0.253	0.015	2.199	0.059	1	0	0.2593864	0.01342471914
3	0.042	0.004	0.274	0.017	2.390	0.084	1	0	0.372106	0.01496805442
4	0.045	0.005	0.284	0.013	2.442	0.041	1	0	0.4825209	0.01628630432
5	0.050	0.005	0.299	0.011	2.556	0.041	1	0	0.6035954	0.01824367532
6	0.068	0.015	0.334	0.015	2.772	0.033	1	0	0.933573	0.02899947984
7	0.105	0.021	0.465	0.040	3.369	0.077	1	0	1.0889401	0.02165601342
8	0.146	0.030	0.778	0.057	5.388	0.096	1	0	1.2067621	0.03995047821
9	0.297	0.047	1.802	0.071	11.113	0.122	1	0	1.3567682	0.02091371749
10	0.555	0.068	4.478	0.070	29.574	0.167	1	0	1.5459864	0.02708663232
}\ExactLatencyScaleHundred

\begin{tikzpicture}
\begin{axis}[
    height=33mm,
    width=0.7\linewidth,
    ymode=log,
    ymin=0.01,
    ymax=200,
    xmin=0.0,
    xmax=10.5,
    log origin=infty,
    axis lines=left,
    xlabel=Depth of Query,
    xlabel near ticks,
    xtick = {0,1,2,...,10},
    ytick = {0.01,0.1, 1, 10, 100},
    yticklabels = {10ms, 100ms, 1s, 10s, 100s},
    xticklabel style = {yshift=1mm,font=\sf\footnotesize},
    ylabel=Query Latency,
    ylabel style={align=center},
    label style={font=\footnotesize\sf},
    ylabel shift=-2mm,
    legend style={
        at={(1.05, 1.0)},anchor=north west,column sep=2pt,
        draw=black,fill=white,line width=.5pt,
        legend image post style={scale=0.5},
        font=\scriptsize,
        /tikz/every even column/.append style={column sep=5pt}
    },
    legend columns=1,
    every axis/.append style={font=\footnotesize},
    xmajorgrids,
    ymajorgrids
]


\addplot[
    color=YellowColor,mark=o,thick,error bars, y dir=both,y explicit,error bar style={YellowColor}
    ] 
    table[x=depth,y=actian_mean, y error=actian_std] {\ExactLatencyScaleHundred};
\addlegendentry{Actian Vector}

\addplot[
    color=GreenColor,mark=triangle*,very thick,error bars,y dir=both,y explicit,error bar style={GreenColor}
] 
    table[x=depth,y=wake-final_mean, y error=wake-final_std] {\ExactLatencyScaleHundred};
\addlegendentry{\system{}\textbf{-100th} (ours)}

\addplot[
    color=GreenColor,mark=triangle,thick,error bars,y dir=both,y explicit,error bar style={GreenColor}
] 
    table[x=depth,y=wake-mid_mean, y error=wake-mid_std] {\ExactLatencyScaleHundred};
\addlegendentry{\system{}\textbf{-10th} (ours)}

\addplot[
    color=GreenColor,mark=triangle*,thick,dashed,error bars,y dir=both,y explicit,error bar style={GreenColor}
] 
    table[x=depth,y=wake-first_mean, y error=wake-first_std] {\ExactLatencyScaleHundred};
\addlegendentry{\system{}\textbf{-1st} (ours)}

\end{axis}
\end{tikzpicture}

\vspace{-4mm}
\caption{Impact of query depth on its performance}
\label{fig:ablation-query-depth}
\vspace{-3mm}
\end{figure}
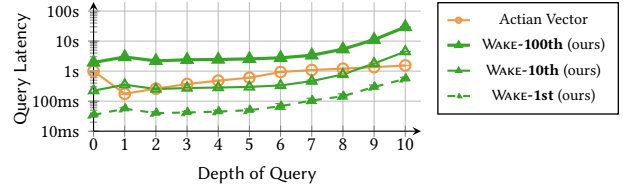

\subsection{Performance on Synthetic Deep Queries}
\label{sec:ablation-synthetic-deep-queries}

We study how \system's internal processing scales with query complexity. We synthetically generate a 100-partition dataset with 100M rows of $11$ integer columns. Ten of which are group-by columns having $4^{10}$ unique combinations. The synthetic query alternates between summation and maximum aggregations with a \emph{query depth} $d = 0, \dots, 10$. For example, if $d = 2$, the query is \texttt{df.max(x, by=(ci,cii)).sum(max\_x, by=ci).sum(sum\_max\_x)}. Figure~\ref{fig:ablation-query-depth} shows \system's latency to 1st, 10th, and 100th (final) results on a logarithmic y-axis. Across all depths, \system takes similar times to process each partition and outputs the results at a regular pace. As the depth increases, \system execution times scale with the primary group cardinality (i.e. $O(4^d)$ groups for depth $d$) because it needs to merge the new aggregate into the existing aggregate. In general, for $n$ rows, $B$ rows per partition, and depth $d$, \system's time complexity is $O(4^d n / B + n)$ whereas non-OLA engine's complexity (e.g., Actian Vector) is expected to be $O(n)$.

\subsection{Data Partition Size Matters}
\label{sec:ablation-partition}

To understand the impact of  partition sizes on overall query latencies, 
    we evaluate \system on scale-100 (100 GB) data with 
        different partition sizes (128 MB, 256 MB, ..., 2048 MB).
As individual partition sizes increase, 
    the time taken to generate the first result increases
whereas
    the final-result latency tends to decrease 
        because the overhead of merging multiple partitions is lower.
\cref{fig:ablation-partition} shows the latencies of multiple TPC-H queries 
    as a multiple of the best performance observed for that query across different partitions.

For queries with small merge operation overhead, 
    the partition size expectedly does not affect the final-result latency. 
Some example queries are Q1, Q4, Q6, Q7, Q12, Q19 (group-by-agg has few groups), Q18, and Q21 (streamed on \textsf{o_orderkey}).

For queries with higher merge overhead (e.g., \textsf{group-by} with a large number of groups), 
    the partition size makes a significant performance difference. 
A larger size reduces the final-result latency as the number of partitions decreases and thus the overhead of computation drops. 
Some examples are Q13, Q15, Q16 (group-by-agg has large number of groups), and Q22 (pruning of \textsf{c_custkey}).


For less OLA-friendly queries (e.g., Q17), 
    a larger partition size helps in reducing both first-result and final-result latencies. 
Hence, a suitable partition size depends on the query as well as the goal---be it either to minimize first-result latency or final-result latency. 
This work, however, does not investigate such an optimizer.





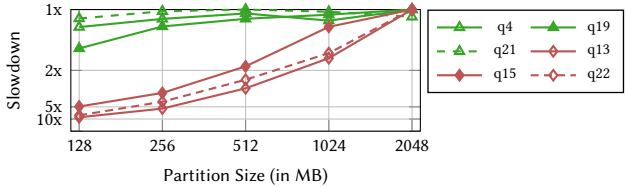
\begin{figure}[t]
\centering
\pgfplotstableread{
partition	1	2	3	4	5	6	7	8	9	10	11	12	13	14	15	16	17	18	19	20	21	22
128	0.88406245	0.03415245	0.44442529	0.85583515	0.30072514	0.79196473	0.73359314	0.55183366	0.1816282	nan	0.19050782	0.81234692	0.11404295	0.44736009	0.20267103	0.11991214	0.10419508	0.81378401	0.68028819	nan	0.92631277	0.13032933
256	0.85544984	0.10708358	0.71849828	0.92311962	0.50421476	0.88238458	1	0.70682585	0.34863901	nan	0.342717	0.91586051	0.18573718	0.76220289	0.31449685	0.21553854	0.22117719	0.97691331	0.86074864	nan	0.98587764	0.24218897
512	1	0.2684672	0.91584614	0.96782314	0.75266507	0.8798406	0.85682622	0.82737177	0.55579759	nan	0.60153541	0.92950662	0.35215048	0.72778518	0.53269267	0.3950655	0.39874027	0.98990164	0.92384373	nan	1	0.42287764
1024	0.88003113	0.58808841	0.9767599	0.90890417	0.90904933	0.97877959	0.91055975	0.90763702	0.77190981	nan	0.81370939	1	0.59966538	1	0.8575894	0.64162438	0.6586474	1	0.95775679	nan	0.98211765	0.64179727
2048	0.88065137	1	1	1	1	1	0.89915323	1	1	nan	1	0.95907994	1	0.78046855	1	1	1	0.96337776	1	nan	0.94031775	1
}\ExactLatencyScaleHundredPartitionAblation

\begin{tikzpicture}
\begin{axis}[
    height=32mm,
    width=0.35\textwidth,
    ymin=0,
    ymax=1,
    xmode=log,
    xmin=128,
    xmax=2048,
    enlarge x limits=0.025,
    xlabel=Partition Size (in MB),
    xlabel near ticks,
    xtick = {128,256,512,1024,2048},
    xticklabels = {128, 256, 512, 1024, 2048},
    ytick = {0, 0.1, 0.2, 0.5, 1},
    yticklabels = {{}, 10x, 5x, 2x, 1x},
    xticklabel style = {font=\sf\footnotesize},
    ylabel=\textsf{Slowdown},
    ylabel near ticks,
    ylabel style={align=center},
    label style={font=\footnotesize\sf},
    ylabel shift=0mm,
    legend style={
        at={(1.02, 1)},anchor=north west,column sep=2pt,
        draw=black,fill=white,line width=.5pt,
        font=\scriptsize,
        /tikz/every even column/.append style={column sep=3pt}
    },
    legend columns=2,
    colormap name=bright,
    every axis/.append style={font=\footnotesize},
    xmajorgrids,
    ymajorgrids,
    ylabel near ticks,
    clip=false,
]

\def\grouponecolor{GreenColor}
\def\grouptwocolor{RedColor}

\addplot+[color = \grouponecolor, thick, mark=triangle, mark repeat=1]
table[x=partition,y=4]{\ExactLatencyScaleHundredPartitionAblation};
\addlegendentry{q4};
\addplot+[color = \grouponecolor, thick, mark=triangle*, mark options={color=\grouponecolor}, mark repeat=1]
table[x=partition,y=19]{\ExactLatencyScaleHundredPartitionAblation};
\addlegendentry{q19};
\addplot+[color = \grouponecolor, thick, mark=triangle, dashed, mark options={solid}, mark repeat=1]
table[x=partition,y=21]{\ExactLatencyScaleHundredPartitionAblation};
\addlegendentry{q21};

\addplot+[color = \grouptwocolor, thick, mark=diamond, mark repeat=1]
table[x=partition,y=13]{\ExactLatencyScaleHundredPartitionAblation};
\addlegendentry{q13};
\addplot+[color = \grouptwocolor, thick, mark=diamond*, mark options={color=\grouptwocolor}, mark repeat=1]
table[x=partition,y=15]{\ExactLatencyScaleHundredPartitionAblation};
\addlegendentry{q15};
\addplot+[color = \grouptwocolor, thick, mark=diamond, mark repeat=1]
table[x=partition,y=22]{\ExactLatencyScaleHundredPartitionAblation};
\addlegendentry{q22};




\end{axis}
\end{tikzpicture}
\vspace{-4mm}
\caption{Impact of partition size
    (red queries---q13, q15, q22--incur higher merge costs
        than green ones---q4, q19, q21)
}
\label{fig:ablation-partition}
\vspace{-3mm}
\end{figure}


\ignore{
\subsection{Pipelined Query Processing}
\label{sec:ablation-timeline}

A key factor for \system's total query latency to be comparable or lower than other exact processing systems despite the additional overhead of merge computations is pipelining of different operations. 
Since each operation, given its input \mdf, can operate independently, a pipelined execution leads to higher utilization of available compute, reducing the total time taken.
In Figure~\ref{fig:timeline-view}, we plot how different nodes process data partitions over the execution of a query.
Although the query executes for a larger duration, we limit the observed timeline to 1 second for the sake of clarity. The operation \textsf{agg} is the final node for query Q6. Thus, each block in the timeline of operation \textsf{agg} represents an obtained output.
}



\section{Related Work}
\label{sec:related}

Approximate Query Processing (AQP) allows users to analyze large datasets interactively 
    at a fraction of the costs of executing exact queries. 
Despite the years of research in AQP, 
    it has been less successful in supporting deeply nested queries~\cite{chaudhri2017nosilverbullet}.




\paragraph{Online Aggregation} 

Hellerstein et al.~\cite{hellerstein1997online} first introduced the idea of OLA.
Since then, various works~\cite{haas1999ripple, luo2002scalable, dittrich2002progressive, jermaine2006sort, li2016wander} have built on top to increase the extent of queries supported, more focused on join queries. 
RippleJoin~\cite{haas1999ripple} progressively joins multiple tables, 
    \cite{luo2002scalable} improves online join algorithms for queries with low cardinality or a high number of groups. 
\cite{dittrich2002progressive} provides OLA support for sort-based join algorithms. 
\cite{jermaine2006sort, jermaine2008scalable} 
    provides a scalable disk-based approach while providing better statistical bounds for sort-based join algorithms. 
WanderJoin~\cite{li2016wander} efficiently handles queries with multiple joins using indexes.
Supporting nested aggregates and sub-queries has been limited in the literature. 
G-OLA~\cite{zeng2015g} generalizes OLA to nested predicates 
    by dividing the processed tuples into certain and uncertain sets based on running estimates. 
Since its efficiency relies on the quality of the estimates, 
    an incorrect estimation can lead to re-computation.
Some related works~\cite{wu2009distributed} also look at OLA in a distributed setting. 
\cite{condie2010online, pansare2011online, shi2012cola} further applies OLA to MapReduce jobs, 
    but are limited to a basic template of queries.

\paragraph{Incremental View Maintenance} 

Materialized views~\cite{blakeley1986efficiently, chaudhuri1995optimizing,agrawal1997efficient} provide improved query performance at the cost of additional storage and maintenance overhead. 
The base table changes require updating the materialized view using a delta query. 
Various techniques~\cite{gupta1993maintaining, yang1997algorithms, salem2000roll, goldstein2001optimizing, nikolic2014linview, ahmad2009dbtoaster, billy2023icde} 
    have been proposed to perform incremental updates to materialized views.
View maintenance and indexes~\cite{chockchowwat2022automatically, chockchowwat2022airphant, park2015neighbor} are fundamentally different from OLA
    since unlike OLA, it does not aim to estimate future results,
which involves managing/propagating uncertainty over multiple operations.

\paragraph{Approximate Query Processing} 

AQP includes synopses-based techniques~\cite{cormode2011synopses} using samples~\cite{chaudhuri2007optimized, babcock2003dynamic, agarwal2013blinkdb, park2018verdictdb, park2017database, bater2020saqe, park2019blinkml, he2018demonstration, park2016visualization}, wavelets~\cite{chakrabarti2001approximate}, histograms~\cite{ioannidis1999histogram, poosala1999approximate}, sketches~\cite{cormode2011sketch}, etc. 
Recent ML-based works formulate AQP as a data-learning problem---through 
    density estimators~\cite{hilprecht2019deepdb, ma2019dbest}, regression models~\cite{ma2019dbest}, 
    generative models~\cite{park2018tablegan, xu2019ctgan, thirumuruganathan2020approximate}.
More recently, query-aware generative models have been used to improve approximation error 
    for low-cardinality queries~\cite{zhang2021approximate, sheoran2022electra}. 
Challenges with these approaches include the high cost of model maintenance and re-training, the limited set of supported queries, and the difficulty in providing correctness bounds.

\paragraph{Cardinality Estimation}

Cardinality estimation is a fundamental problem in query optimization. Traditional approaches include using synopses like histograms, sketches, and samples~\cite{cormode2011synopses, park2020quicksel}. Recently, learning-based methods involving deep autoregressive models~\cite{yang2019deep, yang2020neurocard}, and ensemble-based methods~\cite{liu2021fauce} are gaining traction. 
\cite{wang2020we} provides a thorough comparison.
Any improvements in cardinality estimation can also improve \system's accuracy.







\section{Conclusion}
\label{sec:conclusion}

In this paper, we take a step toward Deep OLA (Online Aggregation), by introducing a novel data model that is \emph{closed} under set-oriented operations (e.g., map/filter/join/agg), thus enabling 
    applications of nested operations
        to previous OLA outputs. 
We show its viability through \system---a Deep OLA system implemented in Rust. 
That is, we have evaluated \system on TPC-H (100 GB) by comparing 
    it against state-of-the-art OLA engines and conventional data systems. 
Our experiments show that \system provides first estimates 
    \mediantimebest faster (median) than conventional systems' computing exact answers
    while offering a median \medianfirsterror relative error.
Moreover, \system incurs small overhead (\medianoverheadbest median slowdown) in
    producing exact answers.
In fact, the pipelined implementation of different ops in \system 
    often provides faster total latencies 
        than exact query processing engines for some queries.
In the future, we aim to extend \system to support a SQL-like declarative interface 
    with automated query optimizations. 
We also aim to extend \system to a distributed setup, 
    making Deep OLA further scalable.




\begin{acks}
This work is supported in part by Microsoft Azure.
\end{acks}

\clearpage

\bibliographystyle{ACM-Reference-Format}
\bibliography{acmart}

\appendix
\section{Inference Correctness Proofs}
\label{sec:appendix-inference}
The following are restated lemmas and correctness proofs of the aggregate inference (\cref{sec:inference}). Given observations $(y_{i,:t}, x_{i,:t})$ up until current progress $t$, \cref{lemma:unbiased-count} and \cref{lemma:unbiased-agg} together show that \system's aggregate inference is unbiased under some conditions.

\lemmaunbiasedcount*
\begin{proof}
    To justify A), $w$ is unbiased if a sufficient condition holds: $\overline{X}_t = \frac{1}{m} \sum_{i=1}^{m(t)} X_i(t) = c_i t^w e^\eps$ where $e^\eps$ is a multiplicative error with $\E[\eps] = 0$ and independence to observations. Notice that $\log \overline{X}_t = \log c_i + w \log t + \eps$. By an unbiasedness of ordinary least square estimator, the estimated $w$ coincides with the expectation of $w$.

    To justify B), we show that all operations in \cref{sec:model} can satisfy the property possibly with some additional conditions depending on the operation. First, a data source produces monomial rows by implementation since it computes progress $t$ from the number of rows read divided by the final number of rows. Mapping trivially preserves the cardinality. Joins and filters preserve monomial relationship if each tuple has uniform selectivity and the selectivity is independent from the progress. Lastly, aggregation preserves monomial relationship if the groupby key is either uniformly distributed across progress or is a clustered key.
    
    If B) holds, $\E[X_i(t)] = c_i t^w$ for all \mdf including the input to the aggregation. Because \system has already observed $x_{i, t}$ at $t$, $\E[X_i(t) | x_{i,:t}] = \E[X_i(t) | x_{i,t}] = x_{i, t}$, implying $c_i = x_{i,t} / t^w$ and leading to \system's group cardinality estimator when evaluated at $t = T = 1$.
\end{proof}

\lemmaunbiasedagg*
\begin{proof}
    This is true for $f_{count}$ because $\xh_{i, :t}$ is given.
    
    Assuming the terms $\{U_{i,j}\}_{j=1}^{X_i(t)}$ in summation are stationary with a constant $\E[U_{i,j}] = \E[U_i]$, $f_{sum}$ is unbiased. Let $Y_i(t) = \sum_{j=1}^{X_i(t)} U_{i,j}$, we can use stationarity to factor $\E[Y_i(T) \; | \; x_{i,:t}, y_{i,:t}] = \E[X_i(T) \; | \; x_{i,:t}] E[U_i \; | \; x_{i,:t}, y_{i,:t}]$. By expanding the conditional expectation, $E[U_i \; | \; x_{i,:t}, y_{i,:t}] = \frac{y_{i,t}}{x_{i,t}}$ which equates the final result to $f_{sum}$.
    
    Assuming that elements and weights are independent, weighted average estimate $f_{avg}$ is unbiased by expectation linearity.
    
    Assuming equal-frequency assumption (i.e. samples are uniformly distributed into $Y$ distinct values), $f_{cd}$ is an unbiased estimator as described in~\cite{Haas1995Distinct}. Additionally, the same work also derives an estimator for skewed frequency, left for future improvements and experiments.

    The $q$-th sample quantile is asymptotically normal around the $q$-th population quantile with variance $q(1-q) / (X_i(t) p_q^2)$ where $p_q$ is the density of the quantile value studied in section 21.2 of~\cite{Vaart1998AsympStats}. Assuming sufficiently large sample size $x_{i,t} = \omega(1/p_q^2)$, which is typical in OLA due to its data size, the sample quantile is unbiased with small variance.
\end{proof}

\section{List of Variance Propagation}
\label{sec:appendix-ci}

\paragraph{Count and Cardinality Estimation} Refer to $f_{\text{count}}$ and \cref{eq:cardinality-estimate}:
\begin{equation}
    \Var(f_{\text{count}}) = \Var(\xh_{i, :t}) = (\xh_{i, :t} \ln(1/t))^2 \Var(w)
\end{equation}

\paragraph{Sum} Recall $f_{\text{sum}} = \frac{y_{i,t}}{x_{i,t}} \xh_{i, :t}$. If summation over mutable attributes, apply $\Var(A + B) = \Var(A) + \Var(B) + 2 \Cov(A, B)$. Otherwise, if summation over constant attributes:
\begin{equation} \label{eq:sum-var}
    \Var(f_{\text{sum}}) = \frac{1}{x_{i,t}^2} \left[ \Var(y_{i,t}) \xh_{i, :t}^2 + \Var(\xh_{i, :t}) y_{i,t}^2 \right]
\end{equation}

\paragraph{Weighted Average} Refers to \cref{eq:scale-weighted-avg}). This is consistent with applying \cref{eq:sum-var} on both the weighted summation and the summation of weights when accounting for their covariance:
\begin{equation}
    \Var(f_{\text{avg}}) = \left[ \frac{y'_{i,t}}{y''_{i,t}} \right]^2 \left[ \frac{\Var(y'_{i,t} / x_{i,t})}{(y'_{i,t} / x_{i,t})^2} + \frac{\Var(y''_{i,t} / x_{i,t})}{(y''_{i,t} / x_{i,t})^2} \right]
\end{equation}

\paragraph{Count Distinct} Refers to \cref{eq:mm1-main}. The variance propagation rule for count distinct can be written in terms of $\frac{\partial Y}{\partial y_{i,t}}$ and $\frac{\partial Y}{\partial \xh_{i, :t}}$ similarly based on \cref{eq:propagation-of-uncertainty}. Both of these terms can be derived through implicit differentiation on \cref{eq:mm1-main}. Let $h'(z) = \frac{dh}{dz}$ derived in terms of digamma functions (calculated in logarithmic terms for numerical stability). The two implicit differentiation equations are the following.
\begin{equation}
    1 = \frac{\partial Y}{\partial y_{i,t}} ( 1 - h(\xh_{i, :t} / Y) ) + \frac{\partial Y}{\partial y_{i,t}} \frac{\xh_{i, :t}}{Y} h'(\xh_{i, :t} / Y)
\end{equation}
\begin{equation}
    0 = \frac{\partial Y}{\partial \xh_{i, :t}} ( 1 - h(\xh_{i, :t} / Y) ) + \left( \frac{\partial Y}{\partial \xh_{i, :t}} \frac{\xh_{i, :t}}{Y} - 1 \right) h'(\xh_{i, :t} / Y)
\end{equation}

Solving them gives us the desired derivatives.
\begin{equation}
    \frac{\partial Y}{\partial y_{i,t}} = \frac{1}{( 1 - h(\xh_{i, :t} / Y) ) + (\xh_{i, :t} / Y) h'(\xh_{i, :t} / Y)}
\end{equation}
\begin{equation}
    \frac{\partial Y}{\partial \xh_{i, :t}} = \frac{h'(\xh_{i, :t} / Y)}{( 1 - h(\xh_{i, :t} / Y) ) + (\xh_{i, :t} / Y h')(\xh_{i, :t} / Y)}
\end{equation}

Lastly, apply \cref{eq:propagation-of-uncertainty}.
\begin{equation}
    \Var(f_{\text{cd}}) = \Var(Y) = \frac{\Var(y_{i,t}) + \Var(\xh_{i, :t}) [h'(\xh_{i, :t} / Y)]^2}{[( 1 - h(\xh_{i, :t} / Y) ) + (\xh_{i, :t} / Y h')(\xh_{i, :t} / Y)]^2}
\end{equation}

\paragraph{Order Statistics} Outputs estimated variance $\Var(y_{i,t})$.

\paragraph{Mapping and Projection} Relies on automatic differentiation to evaluate partial derivatives with respect to all mutable attributes in the mapping. This excludes those mapping functions with undefined derivative on the current attribute value (e.g., $|x|$ at $x = 0$, or a step function), in which \system can mark the confidence interval as ``unstable.'' Projection is a special case of mapping, in which the projected variance is equal to the source variance.

\section{Pipelined Query Processing}
\label{sec:ablation-timeline}

\begin{figure}[t]

\begin{tikzpicture}

\tikzset{
mynode/.style={
    fill=GreenColor!50!white,draw=black,anchor=west,minimum height=2mm,
    minimum width=0.2mm,inner xsep=0mm,
},
mylabel/.style={
    font=\footnotesize\sf,anchor=east,
},
myxtick/.style={
    font=\scriptsize\sf,anchor=north
},
}

\node[draw=black,minimum height=16mm,minimum width=65mm,anchor=south west] 
    (C) at (0,0) {};

\coordinate (N1) at ($(C.north west)!0.125!(C.south west)$);
\coordinate (N2) at ($(C.north west)!0.375!(C.south west)$);
\coordinate (N3) at ($(C.north west)!0.625!(C.south west)$);
\coordinate (N4) at ($(C.north west)!0.875!(C.south west)$);

\foreach \x in {1, 2, 3, 4, 5, 6, 7, 8, 9} {
    \def \r {0.1 * \x}
    \coordinate (X1\x) at ($(C.north west)!\r!(C.north east)$);
    \coordinate (X2\x) at ($(C.south west)!\r!(C.south east)$);
}


\coordinate (N1E) at ($(C.north east)!(N1)!(C.south east)$);
\coordinate (N2E) at ($(C.north east)!(N2)!(C.south east)$);
\coordinate (N3E) at ($(C.north east)!(N3)!(C.south east)$);
\coordinate (N4E) at ($(C.north east)!(N4)!(C.south east)$);


\node[mylabel] at (N1) {read(lineitem)};
\node[mylabel] at (N2) {filter};
\node[mylabel] at (N3) {map};
\node[mylabel] at (N4) {agg};

\foreach \x in {1, 2, 3, 4, 5, 6, 7, 8, 9} {
    \draw[draw=gray] (X1\x) -- (X2\x);
    \pgfmathtruncatemacro \tms {100 * \x}
    \node[myxtick] at ($(X2\x)+(0,0)$) {\tms ms};
}

\node[myxtick] at (C.south west) {0ms};
\node[myxtick] at (C.south east) {1sec};


\node[mynode,fit={ (N1) ($(N1)!0.092!(N1E)$) }] {};
\node[mynode,fit={ ($(N1)!0.092!(N1E)$) ($(N1)!0.167!(N1E)$) }] {};
\node[mynode,fit={ ($(N1)!0.167!(N1E)$) ($(N1)!0.264!(N1E)$) }] {};
\node[mynode,fit={ ($(N1)!0.264!(N1E)$) ($(N1)!0.334!(N1E)$) }] {};
\node[mynode,fit={ ($(N1)!0.334!(N1E)$) ($(N1)!0.405!(N1E)$) }] {};
\node[mynode,fit={ ($(N1)!0.405!(N1E)$) ($(N1)!0.476!(N1E)$) }] {};
\node[mynode,fit={ ($(N1)!0.476!(N1E)$) ($(N1)!0.547!(N1E)$) }] {};
\node[mynode,fit={ ($(N1)!0.547!(N1E)$) ($(N1)!0.618!(N1E)$) }] {};
\node[mynode,fit={ ($(N1)!0.618!(N1E)$) ($(N1)!0.776!(N1E)$) }] {};
\node[mynode,fit={ ($(N1)!0.776!(N1E)$) ($(N1)!0.847!(N1E)$) }] {};
\node[mynode,fit={ ($(N1)!0.847!(N1E)$) ($(N1)!0.942!(N1E)$) }] {};
\node[mynode,fit={ ($(N1)!0.942!(N1E)$) ($(N1)!1!(N1E)$) }] {};

\node[mynode,fit={ ($(N2)!0.092!(N2E)$) ($(N2)!0.120!(N2E)$) }] {};
\node[mynode,fit={ ($(N2)!0.167!(N2E)$) ($(N2)!0.192!(N2E)$) }] {};
\node[mynode,fit={ ($(N2)!0.264!(N2E)$) ($(N2)!0.278!(N2E)$) }] {};
\node[mynode,fit={ ($(N2)!0.334!(N2E)$) ($(N2)!0.359!(N2E)$) }] {};
\node[mynode,fit={ ($(N2)!0.405!(N2E)$) ($(N2)!0.441!(N2E)$) }] {};
\node[mynode,fit={ ($(N2)!0.476!(N2E)$) ($(N2)!0.498!(N2E)$) }] {};
\node[mynode,fit={ ($(N2)!0.547!(N2E)$) ($(N2)!0.570!(N2E)$) }] {};
\node[mynode,fit={ ($(N2)!0.619!(N2E)$) ($(N2)!0.641!(N2E)$) }] {};
\node[mynode,fit={ ($(N2)!0.771!(N2E)$) ($(N2)!0.792!(N2E)$) }] {};
\node[mynode,fit={ ($(N2)!0.847!(N2E)$) ($(N2)!0.869!(N2E)$) }] {};
\node[mynode,fit={ ($(N2)!0.943!(N2E)$) ($(N2)!0.963!(N2E)$) }] {};

\node[mynode,fit={ ($(N3)!0.120!(N3E)$) ($(N3)!0.121!(N3E)$) }] {};
\node[mynode,fit={ ($(N3)!0.191!(N3E)$) ($(N3)!0.192!(N3E)$) }] {};
\node[mynode,fit={ ($(N3)!0.2786!(N3E)$) ($(N3)!0.2788!(N3E)$) }] {};
\node[mynode,fit={ ($(N3)!0.3592!(N3E)$) ($(N3)!0.3594!(N3E)$) }] {};
\node[mynode,fit={ ($(N3)!0.4408!(N3E)$) ($(N3)!0.441!(N3E)$) }] {};
\node[mynode,fit={ ($(N3)!0.4983!(N3E)$) ($(N3)!0.4985!(N3E)$) }] {};
\node[mynode,fit={ ($(N3)!0.5701!(N3E)$) ($(N3)!0.5702!(N3E)$) }] {};
\node[mynode,fit={ ($(N3)!0.6413!(N3E)$) ($(N3)!0.6415!(N3E)$) }] {};
\node[mynode,fit={ ($(N3)!0.7925!(N3E)$) ($(N3)!0.7926!(N3E)$) }] {};
\node[mynode,fit={ ($(N3)!0.8692!(N3E)$) ($(N3)!0.8694!(N3E)$) }] {};
\node[mynode,fit={ ($(N3)!0.9628!(N3E)$) ($(N3)!0.9630!(N3E)$) }] {};

\node[mynode,fit={ ($(N4)!0.121!(N4E)$) ($(N4)!0.132!(N4E)$) }] {};
\node[mynode,fit={ ($(N4)!0.192!(N4E)$) ($(N4)!0.194!(N4E)$) }] {};
\node[mynode,fit={ ($(N4)!0.278!(N4E)$) ($(N4)!0.280!(N4E)$) }] {};
\node[mynode,fit={ ($(N4)!0.359!(N4E)$) ($(N4)!0.361!(N4E)$) }] {};
\node[mynode,fit={ ($(N4)!0.441!(N4E)$) ($(N4)!0.443!(N4E)$) }] {};
\node[mynode,fit={ ($(N4)!0.499!(N4E)$) ($(N4)!0.500!(N4E)$) }] {};
\node[mynode,fit={ ($(N4)!0.570!(N4E)$) ($(N4)!0.572!(N4E)$) }] {};
\node[mynode,fit={ ($(N4)!0.641!(N4E)$) ($(N4)!0.643!(N4E)$) }] {};
\node[mynode,fit={ ($(N4)!0.792!(N4E)$) ($(N4)!0.794!(N4E)$) }] {};
\node[mynode,fit={ ($(N4)!0.869!(N4E)$) ($(N4)!0.871!(N4E)$) }] {};
\node[mynode,fit={ ($(N4)!0.963!(N4E)$) ($(N4)!0.965!(N4E)$) }] {};

\end{tikzpicture}

\caption{Pipelined execution of example query Q6}
\label{fig:timeline-view}
\end{figure}
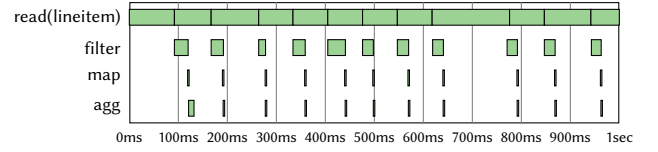

A key factor for \system's total query latency to be comparable or lower than other exact processing systems despite the additional overhead of merge computations is pipelining of different operations. 
Since each operation, given its input \mdf, can operate independently, a pipelined execution leads to higher utilization of available compute, reducing the total time taken.
In Figure~\ref{fig:timeline-view}, we plot how different nodes process data partitions over the execution of a query.
Although the query executes for a larger duration, we limit the observed timeline to 1 second for the sake of clarity. The operation \textsf{agg} is the final node for query Q6. Thus, each block in the timeline of operation \textsf{agg} represents an obtained output.

\end{document}